%
%
%
%


\documentclass[twocolumn]{article}

\usepackage{sansmathfonts} 
\usepackage[default,oldstyle,scale=0.9]{opensans} 

\usepackage{geometry}
 \geometry{
 left=20mm,
 right=20mm,
 top=20mm,
 }

\usepackage[T1]{fontenc} 
\usepackage{microtype} 

\usepackage[english]{babel} 

\usepackage[small,labelfont=bf,up,textfont=it,up]{caption} 
\usepackage{booktabs} 

\usepackage{lettrine} 

\usepackage{enumitem} 
\setlist[itemize]{noitemsep} 

\usepackage{abstract} 

\usepackage{titlesec} 
\renewcommand\thesection{\Roman{section}} 
\renewcommand\thesubsection{\roman{subsection}} 
\titleformat{\section}[block]{\large\scshape\centering}{\thesection.}{1em}{} 
\titleformat{\subsection}[block]{\normalsize}{\thesubsection.}{1em}{} 

\usepackage{titling} 

\usepackage{cite}
\usepackage{amsmath,amssymb,amsfonts, amsthm}
\usepackage{algorithmic}

%

\usepackage[dvips]{graphicx}
\DeclareGraphicsExtensions{.pdf,.jpeg,.png,.eps}
\usepackage{textcomp}
\usepackage{xcolor}
\usepackage{threeparttable}
\usepackage{color, colortbl} 
\usepackage{algorithmic}
\usepackage{algorithm} 
\usepackage{subcaption}
\usepackage{float}
 \usepackage{url}
 \usepackage{verbatim}
\usepackage{lipsum}
\definecolor{Gray}{gray}{0.9} 
\definecolor{LightCyan}{rgb}{0.88,1,1} 
\definecolor{paleblue}{rgb}{0.69, 0.93, 0.93}
\definecolor{paleaqua}{rgb}{0.74, 0.83, 0.9}
\newtheorem{prop}{Proposition}
\newtheorem{corr}{Corollary}

\newcommand*{\transpose}{\mathsf{T}}


\setlength{\droptitle}{-4\baselineskip} 

\pretitle{\begin{center}\Large\bfseries} 
\posttitle{\end{center}} 
\title{Multi-Criteria Radio Spectrum Sharing With Subspace-Based Pareto Tracing} 
\author{Zachary~J.~Grey,~
        Susanna~Mosleh,~Jacob~D.~ Rezac,~ Yao~Ma,\\Jason~B.~Coder,
        and~Andrew~M.~ Dienstfrey
\thanks{
Z. J. Grey and A. M. Dienstfrey are with the Applied and Computational Mathematics Division, Information Technology Lab, National Institute of Standards and Technology, Boulder, CO, 80305
\newline
\indent S. Mosleh, is an associate with the RF Technology Division, Communications Technology Lab, National Institute of Standards and Technology, Boulder, CO, 80305.
\newline
\indent J. D. Rezac, Y. Ma and J. B. Coder are with the RF Technology Division, Communications Technology Lab, National Institute of Standards and Technology, Boulder, CO, 80305.
\newline
\indent Part of this work was presented at the IEEE International Conference on Communications (ICC 2021) \cite{Grey2021}.
\newline
\indent This work is U.S. Government work and not protected by U.S. copyright.
}
}
\date{\today} 


\begin{document}

\maketitle


\section{Introduction}
\lettrine[nindent=0em,lines=3]{A}s wireless communications evolve and proliferate into our daily lives, the demand for radio spectrum grows dramatically. To accommodate this growth, wireless device protocols are beginning to transition from a predominantly-licensed spectrum to a shared approach in which use of the unlicensed spectrum bands is increasing rapidly. The main bottleneck of this approach, however, is balancing new network paradigms with incumbent networks, such as Wi-Fi. 

Previously, unlicensed bands were dominated by Wi-Fi traffic and, occasionally, used by commercial cellular carriers for offloading data that would otherwise have been communicated via Long-Term Evolution (LTE) in the licensed spectrum. In order to address spectrum scarcity in new operating paradigms, mobile network operators are choosing to operate in unlicensed bands (such as LAA) in addition to data offloading \cite{3GPP-2015}. Even though operating LAA in unlicensed bands improves spectral-usage efficiency, it could have a significant influence on Wi-Fi operation and thereby create a number of challenges for spectrum sharing. Understanding and addressing these challenges calls for a deep dive into the operations and parameter selection of both networks in the medium access control (MAC) and physical (PHY) layers.

\subsection{Related Work}
There have been many investigations of fairness in spectrum sharing among LAA and Wi-Fi networks \cite{Quek-Access2016,Cano-Acm2017,Mehrnoush-2018}. Critically, these works do not consider optimizing key performance indicators (KPIs). In contrast to \cite{Quek-Access2016,Cano-Acm2017,Mehrnoush-2018}, the authors in \cite{Gao-2017} and \cite{Gao-2019} maximize LAA throughput and total network sum rate, respectively, over contention window sizes of both networks while guaranteeing the Wi-Fi throughput satisfies a threshold. Ignoring the constraint on Wi-Fi throughput, the authors in \cite{Mengqi-2020} maximize the overall network throughput over the same variables---contention window size of both networks. These studies \cite{Gao-2017, Gao-2019, Mengqi-2020}, however, optimize only a single MAC layer parameter and do not consider optimizing over a set of MAC and PHY layer parameters. A multi-criteria optimization problem was formulated in \cite{Yin-2016} to satisfy the quality of service requirements of LAA eNodeBs by investigating the trade-off between the co-channel interference in the licensed band and the Wi-Fi collision probability in the unlicensed band. The line of work in \cite{Yin-2016} is further expanded in
\cite{Mosleh-GC-2019}.  Considering both PHY and MAC layer parameters, \cite{Mosleh-GC-2019} maximizes the weighted sum rate of an LAA network subject to Wi-Fi throughput constraint with respect to the fraction of time that LAA is active. However, Wi-Fi throughput was not simultaneously optimized in \cite{Yin-2016} nor \cite{Mosleh-GC-2019}.

In its most general form, the spectrum sharing problem can be modeled as a multi-criteria optimization problem where the KPIs of all operators, in a heterogeneous network that coexists on the unlicensed band, are maximized simultaneously. In this context, we explore optimal trade-offs between Wi-Fi and LAA throughputs that are simultaneously maximized over an aggregate of PHY and MAC layer parameters. This motivates a multi-criteria optimization formalism in which the input space has high-dimensionality \cite{Grey2021}. The model we investigated in \cite{Grey2021}, and in this paper as well, uses $17$ MAC and PHY layer variables to characterize the LAA and Wi-Fi coexistence performance. 

Previous experience suggests that not all of these variables are equally important in determining the quality of network KPIs. To address this difficulty, we use a mathematical formalism known as \textit{active subspaces} to determine parameter combinations that change KPI values the most on average---and those that do not \cite{Constantine2015}. The sets of parameter combinations defined by the active subspaces inform KPI approximations and visualizations over a low-dimension subspace---simplifying and regularizing the multi-criteria optimization. Here, we further expand our investigation in \cite{Grey2021} and study the impact of these parameters via analytical models and simulation results. This approach explores the system behavior and provides deep insights into related spectrum sharing and communication systems---LAA and Wi-Fi coexistence is merely an example in this work. This work not only optimizes the unlicensed band spectrum sharing, but also sheds light on the importance of network parameter selection and supplements with exploration to facilitate new explanations and interpretations of results. 

We incorporate active subspace dimension reduction into the multi-criteria optimization framework to analyze, interpret, and explain the shared spectrum coexistence problem. The set of maximizing arguments quantify the inherent trade-off between LAA and Wi-Fi throughputs. The dimension reduction supplements a trade-off analysis of network throughputs by computing a Pareto trace. The Pareto trace provides a \emph{continuous} approximation of Pareto optimal (non-dominated) points in a common domain of a multi-criteria problem \cite{boyd2004convex, bolten2020tracing}---resulting in a parameter manifold consisting of near-best trade-offs between differing throughputs. Facilitated by the dimension reduction, our work provides a continuous description of this parameter manifold that quantifies high-quality performance of both networks.

\subsection{Contributions}

This work differs from those previously mentioned by facilitating \emph{new interpretations and explanations} of results. The main contributions of this paper can be
summarized as follows:
\begin{itemize}
\item For the first time, we incorporate \emph{active subspace dimension reduction into a multi-criteria optimization} problem for radio spectrum sharing. In this application, we maximize modeled Wi-Fi and LAA network throughputs as functions over a high dimensional input space of both PHY and MAC layer parameters. We also provide subsequent \emph{ridge approximations} (defined in section \ref{sec:AS}) of the KPIs in a coexistence scenario over an unlicensed band. 
\item We determine parameter combinations that are most important in changing the quality of network KPIs and those that are not. The sets of parameter combinations achieved \emph{simplifies and regularizes the multi-criteria optimization} by informing KPI approximation and \emph{visualizations over a low-dimension parameter subspace}. 
\item We calculate \emph{convex quadratic ridge approximations} of network throughputs that inform a \emph{continuous quadratic trace} describing the trade-off between near-optimal network throughput combinations. This offers a \emph{continuous description of a parameter manifold} that quantifies high quality performance of both networks.
\item Using these convex quadratic ridge approximations, we also supplement with a numerical experiment suggesting \emph{the resulting trace is more stable} (when subjected to variations in data used for fitting) by virtue of regularizing over a low-dimensional subspace. 
\item Finally, we study the impact of the most important parameter combinations via \emph{explainable and interpretable simulation results} (visualizing the Pareto tracing). Our proposed approach explores the system behavior and provides deep insights into the optimization of unlicensed band spectrum sharing. Simulation results show that the \emph{proposed scheme is a promising candidate for improving both network throughputs in a coexistence scenario and parametrizing predominantly-flat manifolds of Pareto optimal solutions.}
\end{itemize}



\subsection{Paper Organization and Notation}
The paper is organized as follows: section \ref{sec:sys_model} describes the system model and presents the problem formulation. Section \ref{sec:pareto_tracing} formalizes solutions to the problem statement and introduces the concept of a Pareto trace. Section \ref{sec:AS} introduces active subspaces and offers technical considerations for quantifying spectrum sharing as a continuous Pareto trace of near Pareto optimal parameters. Simulation results are shown and discussed in section \ref{sec:Numerics}. Finally, we conclude with an overview of the results and remarks about future work.

\textit{Notation}: Throughout the paper, standard math-font letters are used to denote scalars.
Boldface capital and boldface lower-case letters denote matrices
and vectors, respectively. All vectors, e.g., $\boldsymbol{u}, \boldsymbol{v} \in \mathbb{R}^K$, are assumed to be tall (column) vectors with $K$ entries and all expressions correspond to standard matrix vector multiplication. The transpose of matrix $\mathbf{A}$ is
denoted by $\mathbf{A}^{\transpose}$. The operator $\odot$ represents the Hadamard product and the operator $\left( \cdot \right)_k$ represents the $k$-th index of a vector---i.e., $(\boldsymbol{u} \odot \boldsymbol{v})_k = (\boldsymbol{u})_k(\boldsymbol{v})_k$. Complementary probabilities are represented entry-wise over vectors by the operator $\cdot^{\complement}$ such that $\boldsymbol{p}_{\mathcal{K}}^{\complement}= \boldsymbol{1}_{\mathcal{K}} - \boldsymbol{p}_{\mathcal{K}}$ where $\boldsymbol{1}_{\mathcal{K}}$ is an appropriately sized vector of ones. The gradient $\nabla$ and Hessian $\nabla^2$ are taken with respect to model parameters $\boldsymbol{\theta}$, if not otherwise decorated by a label. Finally, we define sets with calligraphic letters $\mathcal{K}$ and, specifically, $\mathcal{K}$ is used regularly as a placeholder. Parameters which are varied or examined for the purposes of model exploration, optimization, and transformation are denoted using the Greek alphabet. 


\section{System Model and Problem Formulation} \label{sec:sys_model}
We consider a downlink coexistence scenario where two mobile network operators (MNOs) operate over the same shared unlicensed industrial, scientific, and medical (ISM) radio band. We are primarily focused on the operation of cellular base stations in the unlicensed bands. However, LTE base stations may have permission to utilize a licensed band as well. We assume the MNOs use time sharing to simultaneously operate in the unlicensed band and we aim to analyze competing trade-offs in throughputs of the Wi-Fi and LAA systems. A network throughput is a function of both MAC and PHY layer parameters. In this section, we introduce the parameters defining the network topology, MAC
layer protocols, the PHY layer, and briefly discuss the relation of these variables to network throughput.

\subsection{Network Topology}
We consider a coexistence scenario in which the LAA network consists of $L$ Evolved Node B (eNodeBs) indexed by the set $\mathcal{L}$, $i \in \mathcal{L} = \lbrace 1,2,\dots,L\rbrace$, while the Wi-Fi network is composed of
$W$ access points (APs) indexed by the set $\mathcal{W}$, $j \in \mathcal{W} = \lbrace 1,2,\dots,W \rbrace$. Note that our proposed subspace-based Pareto tracing approach could be applied to many types of communication systems (or alternative applications), but LTE is used here as an example. The eNodeBs and APs are randomly distributed over a rectangular area while LAA user equipment (UEs) and Wi-Fi clients/stations (STAs) are, respectively, distributed around each eNodeB and AP independently and uniformly. Each transmission node serves one single antenna UEs/STAs. 
We assume \textit{(i)} both Wi-Fi and LAA are in the saturated traffic condition, i.e., at least one packet is waiting to be sent, \textit{(ii)} there are neither hidden nodes nor false alarm/miss detection problems in the network\footnote{We assume perfect spectrum sensing in both systems. The impact of imperfect sensing is beyond the scope of this paper and investigating the effect of sensing errors is an important topic for future work.}, \textit{(iii)} the channel knowledge is ideal, so the only source of unsuccessful transmission is collision, \textit{(iv)} a successful transmission happens if only one link transmits at a time, i.e.,
exclusive channel access (ECA) model is considered, and \textit{(v)}  each link is subject to Rayleigh fading and Log-normal shadowing.

\subsection{MAC Layer Protocols}
The medium access key feature in both Wi-Fi and LAA involves the station accessing the medium to sense the channel by performing clear channel assessment prior to transmitting. The station only transmits if the medium is determined to be idle. Otherwise, the transmitting station refrains from transmitting data until it senses the channel is available. Although LAA and Wi-Fi technologies follow similar channel access procedures, they utilize different carrier sense schemes, different channel sensing threshold levels, and different channel contention parameters, leading to different unlicensed channel access probabilities and thus, different throughputs.

Conforming with the analytical model in \cite{Bianchi-2000}, the probability of either network transmitting a packet in a randomly chosen time slot can be expressed as 
\begin{equation} \label{Eqn_1}
p_{k} =\\ \frac{2(1 - 2 c_{k})}{(1-2c_{k})(1 + \omega_{k}) + c_{k}\omega_{k}(1-(2c_{k})^{\mu_{k}})}, 
\end{equation}
where $k$ is representative of an index from either $\mathcal{L}$ or $\mathcal{W}$, $c_{k}$ denotes the probability of collision experienced by the $k$-th transmission node, and $\omega_{k}$ and $\mu_{k}$ indicate the minimum contention window size and the maximum back-off stage, respectively, of the $k$-th transmission node on the unlicensed channel. 

\begin{table*}[t]
\centering
\captionsetup{font=footnotesize}
 \caption{Probabilities of time slot allocations}\label{tbl:probs}
 \begin{threeparttable}
 \centering
 \footnotesize
  \begin{tabular}{c|p{8cm} |l}
    \hline
    Probability & Description & Expression \\ \hline
     $(\boldsymbol{p}_{\mathcal{T}})_1$ & Probability unlicensed band is idle & 
     $z_{\mathcal{W}}z_{\mathcal{L}}$\\[0.1cm] 
    $(\boldsymbol{p}_{\mathcal{T}})_2$ & Probability of successful Wi-Fi transmission on the unlicensed band & $ \boldsymbol{p}_{\mathcal{W}}^{\transpose}\boldsymbol{c}_{\mathcal{W}}^{\complement}$\\[0.1cm] 
      $(\boldsymbol{p}_{\mathcal{T}})_3$ & Probability of successful LAA transmission on the unlicensed band & $ \boldsymbol{p}_{\mathcal{L}}^{\transpose}\boldsymbol{c}_{\mathcal{L}}^{\complement}$\\[0.1cm] 
      $(\boldsymbol{p}_{\mathcal{T}})_4$ & Probability of collision among the Wi-Fi transmissions & 
      $z_{\mathcal{L}}\left[1- z_{\mathcal{W}} -  z_{\mathcal{W}}\sum_{j\in\mathcal{W}} (\boldsymbol{p}_{\mathcal{W}})_j/(\boldsymbol{p}^{\complement}_{\mathcal{W}})_j\right]$\\[0.1cm] 
      $(\boldsymbol{p}_{\mathcal{T}})_5$ & Probability of collision among the LAA transmissions & 
      $z_{\mathcal{W}} \left[1- z_{\mathcal{L}} -  z_{\mathcal{L}}\sum_{i\in\mathcal{L}} (\boldsymbol{p}_{\mathcal{L}})_i/(\boldsymbol{p}^{\complement}_{\mathcal{L}})_i\right]$\\[0.1cm] 
      $(\boldsymbol{p}_{\mathcal{T}})_6$ & Probability of collision among Wi-Fi \& LAA transmissions &
      $\left[ 1 - z_{\mathcal{L}}\right]\left[1 -  z_{\mathcal{W}}  \right]$
    \end{tabular}
    \begin{tablenotes}
           \item \text{Note: for brevity, we substitute $z_{\mathcal{K}} = \prod_{k \in \mathcal{K}} (\boldsymbol{p}^{\complement}_{\mathcal{K}})_k$ as the overall probability of transmission for $\mathcal{K}$ representing either}
           \item \text{network $\mathcal{L}$ or $\mathcal{W}$.}
    \end{tablenotes}
   \end{threeparttable}
\end{table*}
\normalsize

To simplify notation, we aggregate the stationary transmission probability model \eqref{Eqn_1} into entries of a vector $\boldsymbol{p}$. With this notation, the Wi-Fi stationary transmission probability of AP $j \in \mathcal{W}$ is considered $(\boldsymbol{p})_j:(c_{j}, \omega_{j}, \mu_{j})\mapsto p_{j}(c_{j}; \omega_{j}, \mu_{j})$ such that $\boldsymbol{p}_{\mathcal{W}} \overset{\Delta}{=} \boldsymbol{p}(\boldsymbol{c}_{\mathcal{W}}; \boldsymbol{\omega}_{\mathcal{W}}, \boldsymbol{\mu}_{\mathcal{W}}) \in \mathbb{R}^{W}$ represent the set of all Wi-Fi probabilities for all APs. The Wi-Fi probabilities depend explicitly on $\boldsymbol{c}_{\mathcal{W}} \in \mathbb{R}^{W}$ and parameters $\boldsymbol{\omega}_{\mathcal{W}},\boldsymbol{\mu}_{\mathcal{W}} \in \mathbb{R}^{W}$. Similarly, for the LAA eNodeB's, we assign $\boldsymbol{p}_{\mathcal{L}} \overset{\Delta}{=} \boldsymbol{p}(\boldsymbol{c}_{\mathcal{L}}; \boldsymbol{\omega}_{\mathcal{L}}, \boldsymbol{\mu}_{\mathcal{L}})$ such that $\boldsymbol{p}_{\mathcal{L}}, \boldsymbol{c}_{\mathcal{L}},\boldsymbol{\omega}_{\mathcal{L}},\boldsymbol{\mu}_{\mathcal{L}} \in \mathbb{R}^{L}$. Note, for brevity, we are dropping the explicit dependencies, $p_{k}(c_{k}; \omega_{k}, \mu_{k})$, and supplement with an index-set subscript, $\boldsymbol{p}_{\mathcal{K}}$, to indicate the length of the vector-valued map as the cardinality of $\mathcal{K}$ in addition to the network association when $\mathcal{K}$ is either $\mathcal{W}$ or $\mathcal{L}$. Moreover, these explicit dependencies are conflated by a set of complementary probabilities, $\boldsymbol{p}_{\mathcal{K}}^{\complement}= \boldsymbol{1}_{\mathcal{K}} - \boldsymbol{p}_{\mathcal{K}}$.

We write the LAA overall probability of transmission as $1 - \prod_{i\in \mathcal{L}} (\boldsymbol{p}^{\complement}_{\mathcal{L}})_i$ and similarly for Wi-Fi $1 -\prod_{j\in \mathcal{W}}(\boldsymbol{p}_{\mathcal{W}}^{\complement})_j$. In a complementary fashion, the collision probability of the transmitting Wi-Fi AP $j\in \mathcal{W}$ and LAA eNodeB $i\in \mathcal{L}$ on a shared unlicensed band are expressed as entries
\begin{align} \label{eq:Wifi_collision}
    (\boldsymbol{c}_{\mathcal{W}})_j = 1 -\frac{1}{(\boldsymbol{p}^{\complement}_{\mathcal{W}})_j}\prod_{k\in \mathcal{W}}(\boldsymbol{p}^{\complement}_{\mathcal{W}})_k \prod_{i\in \mathcal{L}}(\boldsymbol{p}^{\complement}_{\mathcal{L}})_i,
\end{align}
and
\begin{align} \label{eq:LAA_collision}
    (\boldsymbol{c}_{\mathcal{L}})_i = 1 -\frac{1}{(\boldsymbol{p}^{\complement}_{\mathcal{L}})_i}\prod_{k\in \mathcal{L}}(\boldsymbol{p}^{\complement}_{\mathcal{L}})_k \prod_{j\in \mathcal{W}}(\boldsymbol{p}^{\complement}_{\mathcal{W}})_j,
\end{align}
respectively, stored in the vectors $\boldsymbol{c}_{\mathcal{W}} \in \mathbb{R}^{W}$ and $\boldsymbol{c}_{\mathcal{L}} \in \mathbb{R}^{L}$. Notice, \eqref{eq:Wifi_collision} and \eqref{eq:LAA_collision} are expressions dependent on both $\boldsymbol{p}_{\mathcal{L}}$ and $ \boldsymbol{p}_{\mathcal{W}}$ given parameters $\boldsymbol{\omega}_{\mathcal{W}}, \boldsymbol{\mu}_{\mathcal{W}},\boldsymbol{\omega}_{\mathcal{L}}, \boldsymbol{\mu}_{\mathcal{L}}$. Consequently, equations \eqref{eq:Wifi_collision} and \eqref{eq:LAA_collision} represent a coupling of probabilities. 

\subsection{Model Computational Details}
Given the coupling induced by \eqref{eq:Wifi_collision} and \eqref{eq:LAA_collision}, $\boldsymbol{p}_{\mathcal{L}}$ and $\boldsymbol{p}_{\mathcal{W}}$ now implicitly depend on all probabilities and parameters. This leads to a simultaneous non-linear system of four vector-valued equations, $\boldsymbol{p}_{\mathcal{W}}, \boldsymbol{c}_{\mathcal{W}} \in [0,1]^{W}$ and $\boldsymbol{p}_{\mathcal{L}}, \boldsymbol{c}_{\mathcal{L}} \in [0,1]^{L}$. This system consists of $2(W + L)$ equations and unknowns: $\boldsymbol{p}_{\mathcal{L}}, \boldsymbol{p}_{\mathcal{W}}, \boldsymbol{c}_{\mathcal{L}}, \boldsymbol{c}_{\mathcal{W}}$ given parameters: $\boldsymbol{\omega}_{\mathcal{L}}, \boldsymbol{\omega}_{\mathcal{W}}, \boldsymbol{\mu}_{\mathcal{L}}, \boldsymbol{\mu}_{\mathcal{W}}$. We solve the non-linear system using a trust-region method \cite{powell1968fortran, more1980user} according to provided parameters $\boldsymbol{\omega}_{\mathcal{L}}, \boldsymbol{\mu}_{\mathcal{L}} \in \mathbb{R}^{L}$ and $\boldsymbol{\omega}_{\mathcal{W}}, \boldsymbol{\mu}_{\mathcal{W}} \in \mathbb{R}^{W}$ (as model inputs). The result is a set of consistent Wi-Fi AP and LAA eNodeB probabilities of transmission and collision---with numerical implementation constituting a map from parameters to probabilities (MAC layer model).

The probability of a successful transmission for the Wi-Fi AP $j$ (resp. LAA eNodeB $i$) on the unlicensed band is the $j$-th entry of $\boldsymbol{p}_{\mathcal{W}}\odot \boldsymbol{c}_{\mathcal{W}}^{\complement}$ (resp. $i$-th entry of $\boldsymbol{p}_{\mathcal{L}}\odot\boldsymbol{c}_{\mathcal{L}}^{\complement}$). Additionally, the average duration to support one successful transmission in the
unlicensed band can be calculated as
$
\boldsymbol{p}_{\mathcal{T}}^{\transpose}\boldsymbol{t}
$
where $\boldsymbol{p}_{\mathcal{T}}$ is defined in Table \ref{tbl:probs}. The vector $\boldsymbol{t}$ is dictated by an access mechanism. We consider the basic access mechanism as in \cite{Bianchi-2000,Yao-2019} and, therefore, entries of $\boldsymbol{t}$ are defined as $\boldsymbol{t} = (T_{idle}, T_{s,\mathcal{W}}, T_{s,\mathcal{L}}, T_{c, \mathcal{W}}, T_{c,\mathcal{L}}, T_{c,\mathcal{W}\mathcal{L}})^{\transpose}$ per the notation and corresponding computations in \cite{Mosleh-VTC2020}.

\subsection{Physical Layer Parameters and Data Rates}
To calculate network throughput, we also need to introduce data rates parametrized by physical layer parameters. The achievable physical data rate of the $\mathcal{L}$ or $\mathcal{W}$ operators is a function of link signal-to-noise ratio (SNR) that depends on changes with the link distances and propagation model \cite{Mosleh-VTC2020}. The link distances and propagation model admit parameter dependencies that are quantified in the subset of PHY layer parameters in $\boldsymbol{\theta}$. The subsequent data rate dependencies are expressed as $\log_{2}(1+\text{SNR}_{\mathcal{K}}(\boldsymbol{\theta}; \boldsymbol{x}))$
where $\boldsymbol{\theta}$ and $ \boldsymbol{x}$ are vectors representing the full set of parameters that can be varied to study the model behavior and the remaining \emph{fixed} scenario parameters, respectively. These parameter sets are discussed in detail in the following sub-section.

\subsection{Parametrized Model}
We parametrize MAC layer parameters by assuming common minimum contention window sizes and maximum back-off stages for each network. In other words, $\boldsymbol{\omega}_{\mathcal{W}} = \theta_1\boldsymbol{1}_{\mathcal{W}}$, $\boldsymbol{\omega}_{\mathcal{L}} = \theta_2\boldsymbol{1}_{\mathcal{L}}$, $\boldsymbol{\mu}_{\mathcal{W}} = \theta_3\boldsymbol{1}_{\mathcal{W}}$, and $\boldsymbol{\mu}_{\mathcal{L}} = \theta_4\boldsymbol{1}_{\mathcal{L}}$ for parameters $\theta_k \in \mathbb{R}$. It is conceivable that we may consider each independently for a total of $2(L + W)$ parameters as entries in a vector $\boldsymbol{\theta}$. However, we opt for a simplification to four total parameters---two common parameters per network. These parametrizations result in subsequent dependencies. Thus, in general, we consider $\boldsymbol{\mu}_{\mathcal{K}}(\boldsymbol{\theta})$ and $\boldsymbol{\omega}_{\mathcal{K}}(\boldsymbol{\theta})$ where $\mathcal{K}$ is either $\mathcal{W}$ or $\mathcal{L}$ and $\boldsymbol{\theta}\in \mathbb{R}^{D}$ is a vector of all parameters---the first four defined as MAC parameters and the remaining parameters described below and in Table \ref{Tb1}. Notice that certain partial derivatives are zero in this general vector-valued map interpretation, e.g., $\partial (\boldsymbol{\omega}_{\mathcal{W}})_j /\partial (\boldsymbol{\theta})_k =0$ for all $k=2,\dots,D$ and for all $j=1,\dots,W$. This may suggest degeneracy in computations of the Jacobians that subsequently inform gradients of KPIs via the chain rule. Hence, certain directions in the parameter space may be more or less informative in changing various KPI predicted by this model. Regardless, we can now reinterpret the various probabilities as dependent maps $\boldsymbol{p}_{\mathcal{K}}(\boldsymbol{\theta})$ and $\boldsymbol{c}_{\mathcal{K}}(\boldsymbol{\theta})$ for either network $\mathcal{K}$ that depend on variations in $\boldsymbol{\theta}$. Notice any change in the input parameters $\boldsymbol{\theta}$ requires a new solution to the non-linear system. Solving the non-linear system constitutes the majority of the computational burden in the model given new parameters $\boldsymbol{\theta}$ as input.

We also append the physical parameters to the MAC parameter vector by reassigning $\boldsymbol{\theta}$ as a vector representing the full set of parameters that can be varied to study the model behavior. These additional PHY parameters are summarized with appropriate bounds and description as the remaining entries of $\boldsymbol{\theta}$ in Table \ref{Tb1}. Any remaining physical parameters are held fixed and constitute a \textit{scenario} for a particular model evaluation. We aggregate these remaining scenario parameters into a vector $\boldsymbol{x}$, summarized in Table \ref{Tb1}.

The LAA and Wi-Fi throughputs, indicated respectively by $f_{\mathcal{L}}$ and $f_{\mathcal{W}}$, are functions of $D$ total MAC and PHY layer parameters in a vector $\boldsymbol{\theta}$ conditioned on fixed values in a vector $\boldsymbol{x}$,
\begin{align} \label{eq:throughputs}
f_{\mathcal{L}}&: \mathbb{R}^D \times \lbrace \boldsymbol{x} \rbrace \rightarrow \mathbb{R}:(\boldsymbol{\theta}, \boldsymbol{x}) \mapsto f_{\mathcal{L}}(\boldsymbol{\theta}; \boldsymbol{x}),\\ \nonumber
f_{\mathcal{W}}&: \mathbb{R}^D\times \lbrace \boldsymbol{x} \rbrace \rightarrow \mathbb{R}:(\boldsymbol{\theta}, \boldsymbol{x}) \mapsto f_{\mathcal{W}}(\boldsymbol{\theta}; \boldsymbol{x}).
\end{align}
Given a set of consistent probabilities for parameters $\boldsymbol{\theta}$, the throughputs are computed as
\begin{equation}\label{eq:throughput_full}
     f_{\mathcal{K}}(\boldsymbol{\theta}; \boldsymbol{x}) = T_{\mathcal{K}}\log_{2}(1+\text{SNR}_{\mathcal{K}}(\boldsymbol{\theta}; \boldsymbol{x}))\frac{\boldsymbol{p}_{\mathcal{K}}^{\transpose}(\boldsymbol{\theta})\boldsymbol{c}^{\complement}_{\mathcal{K}}(\boldsymbol{\theta})}{\boldsymbol{t}^{\transpose}\boldsymbol{p}_{\mathcal{T}}(\boldsymbol{\theta})},
\end{equation}
for arbitrary $\mathcal{K}$ representing either $\mathcal{L}$ or $\mathcal{W}$, where  $T_{\mathcal{K}}$ indicates the (Wi-Fi or LAA) operator payload duration. We note that the numerator in \eqref{eq:throughput_full} can be rewritten as the total probability of successful transmission---similarly for the second and third entries of $\boldsymbol{p}_{\mathcal{T}}(\boldsymbol{\theta})$. This facilitates an intuitive explanation that throughput is simply the proportion of successful transmissions to all remaining transmission events if data rates defining $\text{SNR}_{\mathcal{K}}$ are held constant.

For this application, LAA throughput $f_{\mathcal{L}}$ and Wi-Fi throughput $f_{\mathcal{W}}$ are only considered functions of $D$ variable parameters in $\boldsymbol{\theta}$. This numerical study and choice of model considers $D=17$. However, parameters and models will be further generalized as part of on-going research efforts.

\subsection{Problem Formulation}
The problem of interest is to maximize a convex combination of network throughputs for the fixed scenario $\boldsymbol{x}$ over the MAC and PHY parameters $\boldsymbol{\theta}$ in a multi-criteria optimization. A \emph{Pareto front} is quantified by the following optimization problem:
\begin{equation} \label{eq:multiopt}
    \underset{\boldsymbol{\theta} \in \mathcal{D}  \subset \mathbb{R}^D}{\text{maximize}} \,\,tf_{\mathcal{L}}(\boldsymbol{\theta}; \boldsymbol{x}) + (1-t)f_{\mathcal{W}}(\boldsymbol{\theta}; \boldsymbol{x}),
\end{equation}
for all $t \in [0,1]$ where $\mathcal{D}$ is the parameter domain defined by the ranges in Table \ref{Tb1}. We refer to $t=0$ and $t=1$ as \emph{left} and \emph{right} solutions, respectively. The goal is to quantify a smooth trajectory $\boldsymbol{\theta}(t;  \boldsymbol{x})$ through MAC and PHY parameter space, or \emph{trace} \cite{bolten2020tracing}, such that the convex combination of throughputs is maximized over a map $\boldsymbol{\theta}:[0,1] \times \{\boldsymbol{x}\} \rightarrow \mathcal{D}$. In Section \ref{sec:pareto_tracing} we formalize this notion of a trace. In Section \ref{sec:AS}, we summarize an exploratory approach for understanding to what extent problem \eqref{eq:multiopt} is convex \cite{boyd2004convex} and how we can intuitively regularize given the possibility of degeneracy induced by the chosen parametrization or otherwise. The empirical evidence generated through visualization and dimension reduction provide justification for convex quadratic approximations and subsequent quadratic trace in Section \ref{sec:Numerics}.

\begin{table*}[t]
\centering
\captionsetup{font=footnotesize}
 \caption{MAC and PHY parameters influencing throughputs }\label{Tb1}
 \begin{threeparttable}
 \centering
 \scalebox{0.98}{
 \scriptsize
  \begin{tabular}{c|p{3.4cm} |c|c}
    \hline
    Param. & Description &Bounds & Nominal \\ \hline
    \rowcolor{paleaqua}
     $(\boldsymbol{\theta})_1$ & Wi-Fi min contention window size&(8, 1024)& 516\\[0.1cm] 
      \rowcolor{paleaqua}
  $(\boldsymbol{\theta})_2$ &  LAA min contention window size&(8, 1024)& 516\\[0.1cm] 
   \rowcolor{paleaqua}
    $(\boldsymbol{\theta})_3$ & Wi-Fi max back-off stage &(0, 8)& 4\\[0.1cm] 
     \rowcolor{paleaqua}
    $(\boldsymbol{\theta})_4$ & LAA max back-off stage&(0, 8)& 4\\[0.1cm] 
    $(\boldsymbol{\theta})_{5}$ & Distance between transmitters &(10 m, 20 m)& 15 m\\[0.1cm]
    $(\boldsymbol{\theta})_{6}$ & Minimum distance between transmitters and receivers&(10m, 35 m)& 22.5 m\\[0.1cm]
    $(\boldsymbol{\theta})_{7}$ & Height of each LAA eNodeB and Wi-Fi AP&(3 m, 6 m) & 4.5 m\\[0.1cm]
    $(\boldsymbol{\theta})_{8}$ & Height of each LAA UEs and Wi-Fi STAs & (1 m, 1.5 m) & 1.25 m\\[0.1cm]
    $(\boldsymbol{\theta})_9$ & Standard deviation of shadow fading& (8.03, 8.29) & 8.16 \\[0.1cm]
    $(\boldsymbol{\theta})_{10}$ & $k_{\text{LOS}}^{\star}$
    & (45.12, 46.38)& 45.75\\[0.1cm]
    $(\boldsymbol{\theta})_{11}$ & $k_{\text{NLOS}}^{\star}$
    & (34.70, 46.38) & 40.54\\[0.1cm]
  \end{tabular} \hspace{0.02in}
  \begin{tabular}{c|p{3.4cm} |c|c}
    \hline
    Param. & Description &Bounds & Nominal \\ \hline
    $(\boldsymbol{\theta})_{12}$ &  $\alpha_{\text{LoS}}^{\star}$ 
    & (17.3, 21.5) & 19.4 \\[0.1cm]
    $(\boldsymbol{\theta})_{13}$ &  $\alpha_{\text{NLoS}}^{\star}$
    &(31.9, 38.3) & 35.1\\[0.1cm] 
    $(\boldsymbol{\theta})_{14}$ & Antenna gain at each transmitter &(0 dBi, 5 dBi) & 2.5 dBi\\[0.1cm]
    $(\boldsymbol{\theta})_{15}$ & Noise figure at each receiver&(5 dB, 9 dB) & 7 dB\\[0.1cm]
    $(\boldsymbol{\theta})_{16}$ & Transmit power at each LAA eNodeB and Wi-Fi AP&(18 dBm, 23 dBm) & 20.5 dBm\\[0.1cm]
    $(\boldsymbol{\theta})_{17}$ & Carrier channel bandwidth &(10 MHz, 20 MHz) & 15 MHz\\ [0.1cm]
    \rowcolor{Gray}
    $(\boldsymbol{x})_1$ & Number of LAA eNodeBs ($L$) &--& 6\\[0.1cm] 
    \rowcolor{Gray}
    $(\boldsymbol{x})_2$ & Number of Wi-Fi APs ($W$) &--& 6\\[0.1cm] 
    \rowcolor{Gray}
    $(\boldsymbol{x})_3$ & Number of LAA UEs &--& 6\\[0.1cm] 
    \rowcolor{Gray}
    $(\boldsymbol{x})_4$ & Number of Wi-Fi STAs &--&6\\[0.1cm] 
    \rowcolor{Gray}
    $(\boldsymbol{x})_5$ & Number of unlicensed channels &--& 1\\[0.1cm] 
    \rowcolor{Gray}
    $(\boldsymbol{x})_{8}$ &  Scenario width &--&120 m\\[0.1cm]
    \rowcolor{Gray}
    $(\boldsymbol{x})_{9}$ &  Scenario height &--&80 m
    \end{tabular}}
     \begin{tablenotes}
     \footnotesize
           \item \text{Note: parameter ranges are established by 3GPP TS 36.213 V15.6.0 and 3GPP TR. 36.889 v13.0.0. MAC parameters are highlighted in blue.}
           \item \text{Scenario parameters are highlighted in gray. PHY parameters are not highlighted.}
      \item  \text{${}^{\star}$The path-loss for both line-of-sight (LoS) and non-LoS scenarios can be computed as $k + \alpha\log_{10}(d)$ in dB, where $d$ is the distance in}
      \item \text{meters between the transmitter and the receiver.}
    \end{tablenotes}
   \end{threeparttable}
\end{table*}
\normalsize

\section{Pareto Tracing} \label{sec:pareto_tracing}
We refer to \eqref{eq:multiopt} as a maximization of the convex total objective or \emph{scalarization} \cite{bolten2020tracing}. In this case, we have a single degree of freedom to manipulate the scalarization parametrized by $t \in [0,1]$ such that
$
\varphi_t(\boldsymbol{\theta}; \boldsymbol{x}) = (1- t) f_{\mathcal{W}}(\boldsymbol{\theta}; \boldsymbol{x}) + t f_{\mathcal{L}}(\boldsymbol{\theta}; \boldsymbol{x})
$.
In order to satisfy the necessary conditions for a (locally) Pareto optimal solution, we must determine $\boldsymbol{\theta} \in \mathbb{R}^D$ critical for $\varphi_t$. The condition, also known as \emph{stationarity} condition, requires $\nabla \varphi_t(\boldsymbol{\theta}; \boldsymbol{x}) = (1- t)\nabla f_{\mathcal{W}}(\boldsymbol{\theta}; \boldsymbol{x}) + t\nabla f_{\mathcal{L}}(\boldsymbol{\theta}; \boldsymbol{x}) = \boldsymbol{0}$. Moreover, $\boldsymbol{\theta}$ is a locally (unique) Pareto optimal solution if $\nabla^2 \varphi_t(\boldsymbol{\theta}; \boldsymbol{x}) = (1- t) \nabla^2 f_{\mathcal{W}}(\boldsymbol{\theta}; \boldsymbol{x}) + t \nabla^2 f_{\mathcal{L}}(\boldsymbol{\theta}; \boldsymbol{x})$ is (symmetric) negative definite \cite{bolten2020tracing, boyd2004convex}.

In order to find a continuous (in $t$) solution to \eqref{eq:multiopt} we first examine the aforementioned necessary conditions. Provided the set of all Pareto optimal solutions to \eqref{eq:multiopt} is convex, we can continuously parametrize this set as $t \mapsto \boldsymbol{\theta}(t)$ for all $t \in [0,1]$. As an analogy, consider $t$ as pseudo-time describing an analogous trajectory through parameter space, i.e., launching from one maximizing argument to another. This interpretation is facilitated by the following Proposition (dropping scenario parameters for brevity):

\begin{prop}\label{prop:ODE}
	Given full rank $\nabla^2   \varphi_t(\boldsymbol{\theta}(t)) \in \mathbb{R}^{D \times D}$, the one-dimensional immersed submanifold parametrized by $\boldsymbol{\theta}(t) \in \mathbb{R}^D$ for all $t \in [0,1]$ is necessarily Pareto optimal such that
	$$
	\nabla^2 \varphi_t(\boldsymbol{\theta}(t))\dot{\boldsymbol{\theta}}(t) = \nabla f_{\mathcal{W}}(\boldsymbol{\theta}(t)) - \nabla  f_{\mathcal{L}}(\boldsymbol{\theta}(t)).
	$$
\end{prop}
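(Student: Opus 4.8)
The plan is to read the proposition as a statement about the zero set of the stationarity map and to obtain the claimed relation by implicit differentiation in the pseudo-time $t$. First I would define the parametric gradient field
$$
G(\boldsymbol{\theta}, t) = \nabla \varphi_t(\boldsymbol{\theta}) = (1-t)\nabla f_{\mathcal{W}}(\boldsymbol{\theta}) + t\, \nabla f_{\mathcal{L}}(\boldsymbol{\theta}),
$$
so that the first-order necessary (stationarity) condition for Pareto optimality stated just above is precisely $G(\boldsymbol{\theta}, t) = \boldsymbol{0}$. The trajectory $\boldsymbol{\theta}(t)$ in the statement is, by construction, the curve of stationary points, i.e. it satisfies $G(\boldsymbol{\theta}(t), t) = \boldsymbol{0}$ for every $t \in [0,1]$. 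Because the Jacobian of $G$ with respect to $\boldsymbol{\theta}$ is exactly the Hessian $\nabla^2 \varphi_t(\boldsymbol{\theta})$, the full-rank hypothesis makes this Jacobian invertible, so the implicit function theorem guarantees that such a $\boldsymbol{\theta}(t)$ exists locally and is continuously differentiable in $t$; this is also what justifies speaking of a genuine one-dimensional immersed submanifold.

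The core computation is then a single application of the chain rule. Differentiating the identity $G(\boldsymbol{\theta}(t), t) = \boldsymbol{0}$ totally in $t$ gives
$$
D_{\boldsymbol{\theta}} G\big(\boldsymbol{\theta}(t), t\big)\, \dot{\boldsymbol{\theta}}(t) + \partial_t G\big(\boldsymbol{\theta}(t), t\big) = \boldsymbol{0}.
$$
The first factor is $D_{\boldsymbol{\theta}} G = (1-t)\nabla^2 f_{\mathcal{W}} + t\,\nabla^2 f_{\mathcal{L}} = \nabla^2 \varphi_t$, the Hessian of the scalarization. The explicit partial in $t$ acts only on the convex weights, yielding $\partial_t G = -\nabla f_{\mathcal{W}} + \nabla f_{\mathcal{L}}$. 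Substituting and moving this term to the right-hand side produces
$$
\nabla^2 \varphi_t(\boldsymbol{\theta}(t))\, \dot{\boldsymbol{\theta}}(t) = \nabla f_{\mathcal{W}}(\boldsymbol{\theta}(t)) - \nabla f_{\mathcal{L}}(\boldsymbol{\theta}(t)),
$$
which is exactly the asserted ODE; invertibility of the Hessian additionally lets me solve for $\dot{\boldsymbol{\theta}}(t)$ explicitly. This is recognizable as the predictor field of a homotopy/continuation method tracking stationary points as the scalarization weight varies.

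The step I expect to require the most care is the clean separation of the two ways $t$ enters once $\boldsymbol{\theta}(t)$ is substituted: the implicit dependence through the argument $\boldsymbol{\theta}(t)$, which contributes the Hessian term, versus the explicit dependence through the weights $(1-t)$ and $t$, which contributes the gradient-difference term. Getting the sign and the ordering of $\nabla f_{\mathcal{W}} - \nabla f_{\mathcal{L}}$ correct hinges entirely on $\partial_t[(1-t)] = -1$ and $\partial_t[t] = +1$. A secondary point worth stating explicitly is that the full-rank assumption is doing double duty---licensing the implicit function theorem (existence and smoothness of the trace) and rendering the ODE solvable for $\dot{\boldsymbol{\theta}}(t)$---so I would flag it rather than leave it implicit. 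I would not belabor the ``immersed submanifold'' language beyond noting that smoothness of $\boldsymbol{\theta}(t)$ from the implicit function theorem supplies the immersion wherever $\dot{\boldsymbol{\theta}}(t) \neq \boldsymbol{0}$.
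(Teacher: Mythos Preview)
Your proposal is correct and follows essentially the same route as the paper: both differentiate the stationarity identity $\nabla\varphi_t(\boldsymbol{\theta}(t))=\boldsymbol{0}$ in $t$, separate the Hessian-times-velocity contribution from the explicit $\partial_t$ contribution through the weights, and read off the ODE. The only cosmetic difference is in how the ``immersed submanifold'' clause is justified---the paper appeals to a flowout theorem for the vector field $\left(\nabla^2\varphi_t\right)^{-1}\!\bigl(\nabla f_{\mathcal{W}}-\nabla f_{\mathcal{L}}\bigr)$, whereas you invoke the implicit function theorem on the zero set of $G$; both routes use the full-rank Hessian hypothesis and arrive at the same conclusion.
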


\begin{proof}
	Differentiating the stationarity condition by composing in pseudo-time, $\nabla \varphi_t \circ \boldsymbol{\theta}(t) = \boldsymbol{0}$, results in
\begin{align*}
\boldsymbol{0}&{}=\frac{d}{dt}\left(\nabla \varphi_t \circ \boldsymbol{\theta}(t)\right)\\
&{}= \frac{d}{dt} (1- t)\left(\nabla f_{\mathcal{W}} \circ \boldsymbol{\theta}(t) \right) + \frac{d}{dt}t\left(\nabla f_{\mathcal{L}}\circ \boldsymbol{\theta}(t)\right)\\
&{}=-\nabla f_{\mathcal{W}} \circ \boldsymbol{\theta}(t) + (1-t)\left(\nabla^2_{\boldsymbol{\theta}} f_{\mathcal{W}} \circ \boldsymbol{\theta}(t)\right)\dot{\boldsymbol{\theta}}(t)\\
&\hspace{1cm} +\nabla f_{\mathcal{L}}\circ \boldsymbol{\theta}(t) + t\left(\nabla^2_{\boldsymbol{\theta}}f_{\mathcal{L}} \circ \boldsymbol{\theta}(t)\right)\dot{\boldsymbol{\theta}}(t)\\
&{}=\nabla^2 \varphi_t(\boldsymbol{\theta}(t))\dot{\boldsymbol{\theta}}(t) - \left(\nabla f_{\mathcal{W}} \circ \boldsymbol{\theta}(t) - \nabla f_{\mathcal{L}} \circ \boldsymbol{\theta}(t)\right).
\end{align*}
\normalsize

Then, the flowout along necessary Pareto optimal solutions constitutes an immersed submanifold of $\mathbb{R}^D$ nowhere tangent to the integral curve generated by the system of differential equations (See \cite{Lee2003}, Thm. 9.20)---i.e., assuming $\nabla^2 \varphi_t(\boldsymbol{\theta}(t))$ is full rank, $\nabla^2   \varphi_t(\boldsymbol{\theta}(t))^{-1}\left(\nabla   f_{\mathcal{W}}(\boldsymbol{\theta}(t)) - \nabla   f_{\mathcal{L}}(\boldsymbol{\theta}(t))\right)$ is the infinitesimal generator of a submanifold in $\mathbb{R}^D$ of locally Pareto optimal solutions contained in the flowout.
\end{proof}

We note that the system of equations in Prop. \ref{prop:ODE}, proposed in \cite{bolten2020tracing}, constitutes a set of \textit{necessary} conditions for optimality. The utility of Prop. \ref{prop:ODE} offers an interpretation that the solution set (if it exists) constitutes elements of a submanifold in $\mathbb{R}^D$. This formalism establishes a theoretical foundation for the use of manifold learning or splines over sets of points that are approximately Pareto optimal. Our future research efforts are motivated by drawing comparisons and complementary analysis with the aforementioned methods---Prop. \ref{prop:ODE} serves as the theoretical motivation and interpretation for such efforts.

Suppose $f_{\mathcal{W}}$ and $f_{\mathcal{L}}$ are well approximated by convex quadratics as \textit{surrogates}, 
\begin{align*}
-f_{\mathcal{L}}(\boldsymbol{\theta}; \boldsymbol{x})&\approx \boldsymbol{\theta}^{\transpose} \boldsymbol{Q}_{\mathcal{L}} \boldsymbol{\theta} + \boldsymbol{a}_{\mathcal{L}}^{\transpose}\boldsymbol{\theta} + c_{\mathcal{L}} \\\nonumber
-f_{\mathcal{W}}(\boldsymbol{\theta}; \boldsymbol{x})&\approx \boldsymbol{\theta}^{\transpose} \boldsymbol{Q}_{\mathcal{W}} \boldsymbol{\theta} + \boldsymbol{a}_{\mathcal{W}}^{\transpose}\boldsymbol{\theta} + c_{\mathcal{W}},
\end{align*}
such that $\boldsymbol{Q}_{\mathcal{L}}, \, \boldsymbol{Q}_{\mathcal{W}} \in S_{++}^D$,  $\boldsymbol{a}_{\mathcal{L}},\, \boldsymbol{a}_{\mathcal{W}} \in \mathbb{R}^D$, and $c_{\mathcal{L}},\,c_{\mathcal{W}} \in \mathbb{R}$ where $S_{++}^D$ denotes the collection of $D$-by-$D$ postive definite matrices---note the change in sign convention. In this case, the Pareto trace defined in Prop. \ref{prop:ODE} can be solved in closed form,
\begin{equation} \label{eq:quad_trace}
    \boldsymbol{\theta}(t; \boldsymbol{x}) = \frac{1}{2}\left[t \boldsymbol{Q}_{\mathcal{L}} + (1-t)\boldsymbol{Q}_{\mathcal{W}}\right]^{-1}\left[(t-1) \boldsymbol{a}_{\mathcal{W}} - t\boldsymbol{a}_{\mathcal{L}}\right],
\end{equation}
referred to in this context as a \textit{quadratic trace}. The challenge in our context given the possibility of degenerate Jacobians (and subsequent possibility of degeneracy in quadratic Hessians) associated with $f_{\mathcal{W}}$ and $f_{\mathcal{L}}$ is: \emph{how can we assess conditions informing \eqref{eq:quad_trace} or regularize the solve to guarantee these conditions?} We offer an approach that assess these conditions by regularization through \textit{subspace-based dimension reduction} to inform convex quadratic surrogates as approximations satisfying Prop. \ref{prop:ODE}.

Naively (for arbitrary dimension $D$ and number of samples $N$), we can also pose a convex optimization problem over the cone of positive semi-definite matrices,
\begin{equation} \label{eq:global_spd_quad}
    \underset{\begin{matrix}\boldsymbol{Q} \in S_{+}^D\\ \boldsymbol{a} \in \mathbb{R}^D\\ c \in \mathbb{R}\end{matrix}}{\text{minimize}} \,\,\frac{1}{2}\sum_{n=1}^N\left( c + \boldsymbol{\theta}_n^{\transpose}\boldsymbol{a} + \boldsymbol{\theta}_n^{\transpose} \boldsymbol{Q} \boldsymbol{\theta}_n - f_n\right)^2.
\end{equation}
Approximations resulting from problem \eqref{eq:global_spd_quad} allow us to compute the quadratic trace \eqref{eq:quad_trace} as a global data-driven surrogate given $\{(\boldsymbol{\theta}_n, f_n)\}$ of paired parameters and function responses where $f_n$ represents the throughput response---modeled or measured---for either network; e.g., $f_n = f_{\mathcal{W}}(\boldsymbol{\theta}_n; \boldsymbol{x})$ or $f_n = f_{\mathcal{L}}(\boldsymbol{\theta}_n; \boldsymbol{x})$. Solutions to problem \eqref{eq:global_spd_quad} are described as a \emph{fit} $\mathcal{CVX}\left( \{(\boldsymbol{\theta}_n, f_n)\}\right)$. Following an approximation to \eqref{eq:global_spd_quad}, the global quadratic Pareto trace is predicated on the rank of the minimizing arguments $\boldsymbol{Q}$---in particular, the rank of the convex combination of matrices given by independent solutions to \eqref{eq:global_spd_quad} using data for either throughput, i.e., $\{\boldsymbol{Q}_{\mathcal{L}}, \boldsymbol{a}_{\mathcal{L}}, c_{\mathcal{L}}\} = \mathcal{CVX}\left( \{(\boldsymbol{\theta}_n, f_{\mathcal{L}}(\boldsymbol{\theta}_n; \boldsymbol{x}))\}\right)$ and $\{\boldsymbol{Q}_{\mathcal{W}}, \boldsymbol{a}_{\mathcal{W}}, c_{\mathcal{W}}\} = \mathcal{CVX}\left( \{(\boldsymbol{\theta}_n, f_{\mathcal{W}}(\boldsymbol{\theta}_n; \boldsymbol{x}))\}\right)$. Note that in \eqref{eq:global_spd_quad} we relax the condition of strict positive definiteness to account for potentially degenerate quadratics---which can act as alternative models suggesting low-dimensional approximations \cite{Grey2017,Li1992, Constantine2015}---ensuring the solution space of \eqref{eq:global_spd_quad} is closed \cite{boyd2004convex}. Efficient solutions (fits) to problem \eqref{eq:global_spd_quad} can be computed with ``CVX,'' an open- source package for defining and solving convex programs \cite{cvx,gb08}. 

Evidently, from equation \eqref{eq:quad_trace} we see that the stability of the quadratic trace depends on the condition number of $t\boldsymbol{Q}_{\mathcal{L}} + (1-t)\boldsymbol{Q}_W$. In other words, subsequent fits $\mathcal{CVX}\left( \{(\boldsymbol{\theta}_n, f_{\mathcal{K}}(\boldsymbol{\theta}_n; \boldsymbol{x}))\}\right)$ dependent on various data sets as input may result in small perturbations to the coefficient matrices $\boldsymbol{Q}_{\mathcal{K}}$ for either throughput.\footnote{Notice that the fits $\mathcal{CVX}\left( \{(\boldsymbol{U}_r^{\transpose}\boldsymbol{\theta}_n, f_{\mathcal{K}}(\boldsymbol{\theta}_n; \boldsymbol{x}))\}\right)$ depend on the fixed scenario $\boldsymbol{x}$ and consequently any subsequent reference to these approximations should reflect this dependency.} Consequently, the otherwise unknown (and potentially variable) conditioning of $\boldsymbol{Q}_{\mathcal{K}}$ may result in instabilities of the quadratic trace subject to small perturbations---despite a reliable constrained optimization \eqref{eq:global_spd_quad}. We desire solutions that are \emph{stable} (better conditioned) against variations in the quadratic surrogate when presented with new or perturbed data. In this work, we study numerical experiments in section \ref{sec:Numerics} emphasizing a significant improvement after dimension reduction. Formalizing improvements in stability induced by dimension reduction is a topic for future research.

\section{Active Subspaces} \label{sec:AS}
Following the development in \cite{Constantine2015}, we introduce a dimension-reduction method to approximate the functions in \eqref{eq:multiopt}. This dimension-reduction technique draws from active subspace analysis to identify linear subspaces of parameters that lead to the most significant change in a function. To describe the details of the active subspaces approach, we introduce a scalar-valued function $f_{\mathcal{K}}:\mathcal{D} \subset \mathbb{R}^D \rightarrow \mathbb{R}$ defined on a compact domain $\mathcal{D}$. Again, where $\mathcal{K}$ denotes either throughput functions $f_{\mathcal{L}}$ or $f_{\mathcal{W}}$ and dropping scenario parameters for brevity.

The main results of the section rely on an eigendecomposition of the symmetric positive semi-definite matrix $\boldsymbol{C}_{\mathcal{K}}\in\mathbb{R}^{D\times D}$ defined as
\begin{equation} \label{eq:C}
\boldsymbol{C}_{\mathcal{K}} = \int_{\mathcal{D}} \nabla f_{\mathcal{K}}(\boldsymbol{\theta}) \nabla f_{\mathcal{K}}^{\transpose}(\boldsymbol{\theta}) d\boldsymbol{\theta}.
\end{equation}
In the throughput maximization application discussed in this article, the compact domain  $\mathcal{D}$ is a $D$-dimensional rectangle constructed from the Cartesian product of lower and upper bounds, $(\boldsymbol{\theta}_{\ell})_i \leq \theta_i \leq (\boldsymbol{\theta}_{u})_i$ for all $i=1,\dots,D$. We also assume a uniform measure for integration over $\mathcal{D}$. Additional discussion is available in \cite{Grey2021}.

If $\text{rank}(\boldsymbol{C}_{\mathcal{K}}) = r < D$, its eigendecomposition $\boldsymbol{C}_{\mathcal{K}} = \boldsymbol{W}_{\mathcal{K}}\boldsymbol{\Lambda}_{\mathcal{K}} \boldsymbol{W}_{\mathcal{K}}^{\transpose}$ with orthogonal $\boldsymbol{W}_{\mathcal{K}}$ satisfies $\boldsymbol{\Lambda}_{\mathcal{K}} = \mathrm{diag}(\lambda_1,\dots,\lambda_D)$ with 
\begin{equation} \label{eq:eig_decay}
    \lambda_1 \geq \lambda_2 \geq \dots \geq \lambda_r > \lambda_{r+1} = \dots = \lambda_D = 0.
\end{equation}
This defines two sets of important
$
\boldsymbol{W}_{r,\mathcal{K}} = [\boldsymbol{w}_1 \dots \boldsymbol{w}_r]_{\mathcal{K}} \in \mathbb{R}^{D\times r}
$
and unimportant
$
\boldsymbol{W}_{r,\mathcal{K}}^{\perp} = [\boldsymbol{w}_{r+1} \dots \boldsymbol{w}_D]_{\mathcal{K}} \in \mathbb{R}^{D\times (D-r)}
$
\emph{directions} over the domain. The column span of $\boldsymbol{W}_{r,\mathcal{K}}$ and $\boldsymbol{W}_{r,\mathcal{K}}^{\perp}$ constitute the \emph{active} and \emph{inactive} subspaces, respectively. Note that \eqref{eq:C} depends on a single scalar-valued response and, therefore, the active subspaces potentially differ for the separate throughputs in \eqref{eq:throughputs}.

What do we mean by \emph{important directions}? Using the eigenvectors $\boldsymbol{w}_j \in \mathbb{R}^D$, we can simplify $\boldsymbol{w}_j^{\transpose}\boldsymbol{C}_{\mathcal{K}}\boldsymbol{w}_j$ to obtain an expression for the eigenvalues,
\begin{equation}\label{eq:eigs}
    \lambda_j = \int_{\mathcal{D}}\left(\boldsymbol{w}_j^{\transpose} \nabla f_{\mathcal{K}}(\boldsymbol{\theta})\right)^2 d\boldsymbol{\theta}.
\end{equation}
In other words, the $j$-th eigenvalue can be interpreted as the mean squared directional derivative of $f_{\mathcal{K}}$ in the direction of $\boldsymbol{w}_j \in \mathbb{R}^D$ \cite{Constantine2015,Grey2021}. Precisely,
the directional derivative can be written $df_{\mathcal{K}}(\boldsymbol{\theta})[\boldsymbol{w}] = \boldsymbol{w}^{\transpose} \nabla f_{\mathcal{K}}(\boldsymbol{\theta})$, similarly $df^2_{\mathcal{K}}(\boldsymbol{\theta})[\boldsymbol{w}] = (\boldsymbol{w}^{\transpose} \nabla f_{\mathcal{K}}(\boldsymbol{\theta}))^2$,
and we obtain $\lambda_j = \mathbb{E}[df^2_{\mathcal{K}}[\boldsymbol{w}_j]]$ for expectation defined over parameters $\boldsymbol{\theta}$.
Thus, the ordering \eqref{eq:eig_decay} of the eigenpairs $\lbrace(\lambda_j, \boldsymbol{w}_j)\rbrace_{j=1}^D$ indicate directions $\boldsymbol{w}_j$ over which the function $f_{\mathcal{K}}$ changes more, on average, up to the $r+1,\dots,D$ directions that \emph{do not change the function at all} \cite{Constantine2015}. In fact, either throughput response from \eqref{eq:throughputs} is referred to as a \emph{ridge function} over $\boldsymbol{\theta}$'s if and only if $df_{\mathcal{K}}(\boldsymbol{\theta})[\boldsymbol{w}] = 0$ for all $\boldsymbol{w} \in \text{Null}(\boldsymbol{W}_{r,\mathcal{K}}^{\transpose})$.  We formalize this interpretation in the following:

\begin{prop}\cite{Constantine2015} \label{prop:ridge}
Given $\boldsymbol{w}_j$ for $j=r+1,\dots,D$ as the trailing eigenvectors of $\boldsymbol{C}_{\mathcal{K}}$, the paired eigenvalues $\lambda_j = 0$ if and only if $f_{\mathcal{K}}$ does not change over $\mathrm{span}\lbrace \boldsymbol{w}_{r+1},\dots,\boldsymbol{w}_D\rbrace$.
\end{prop}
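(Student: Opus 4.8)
The plan is to prove both implications directly from the integral representation of the eigenvalues in \eqref{eq:eigs}, exploiting the nonnegativity of the squared directional derivative together with the smoothness of $f_{\mathcal{K}}$ and the geometry of the rectangular domain $\mathcal{D}$. The key object throughout is the identity $\lambda_j = \int_{\mathcal{D}}(\boldsymbol{w}_j^{\transpose}\nabla f_{\mathcal{K}}(\boldsymbol{\theta}))^2\,d\boldsymbol{\theta}$, which already packages the link between a scalar eigenvalue and a pointwise gradient condition.

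First I would establish the forward implication. Suppose $\lambda_j = 0$ for some $j \in \{r+1,\dots,D\}$. Because \eqref{eq:eigs} integrates the nonnegative integrand $(\boldsymbol{w}_j^{\transpose}\nabla f_{\mathcal{K}}(\boldsymbol{\theta}))^2$ against the strictly positive uniform measure on $\mathcal{D}$, a vanishing integral forces the integrand to vanish almost everywhere; invoking continuity of $\nabla f_{\mathcal{K}}$ upgrades this to $\boldsymbol{w}_j^{\transpose}\nabla f_{\mathcal{K}}(\boldsymbol{\theta}) = 0$ for every $\boldsymbol{\theta} \in \mathcal{D}$. Repeating for each trailing index and taking linear combinations shows $df_{\mathcal{K}}(\boldsymbol{\theta})[\boldsymbol{w}] = \boldsymbol{w}^{\transpose}\nabla f_{\mathcal{K}}(\boldsymbol{\theta}) = 0$ for every $\boldsymbol{w} \in \mathrm{span}\{\boldsymbol{w}_{r+1},\dots,\boldsymbol{w}_D\}$. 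Finally, for any admissible displacement $\boldsymbol{v}$ in this span I would parametrize the segment $s \mapsto \boldsymbol{\theta} + s\boldsymbol{v}$, which remains in $\mathcal{D}$ by convexity of the rectangle, and apply the fundamental theorem of calculus, $f_{\mathcal{K}}(\boldsymbol{\theta}+\boldsymbol{v}) - f_{\mathcal{K}}(\boldsymbol{\theta}) = \int_0^1 \boldsymbol{v}^{\transpose}\nabla f_{\mathcal{K}}(\boldsymbol{\theta}+s\boldsymbol{v})\,ds = 0$, concluding that $f_{\mathcal{K}}$ does not change over the inactive span. For the converse I would simply reverse the chain: if $f_{\mathcal{K}}$ is constant along $\mathrm{span}\{\boldsymbol{w}_{r+1},\dots,\boldsymbol{w}_D\}$, then differentiating $s \mapsto f_{\mathcal{K}}(\boldsymbol{\theta}+s\boldsymbol{w}_j)$ at $s=0$ yields $\boldsymbol{w}_j^{\transpose}\nabla f_{\mathcal{K}}(\boldsymbol{\theta}) = 0$ pointwise for each trailing $j$, whence \eqref{eq:eigs} gives $\lambda_j = 0$.

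The main obstacle is not any single calculation but the careful passage from the single scalar statement $\lambda_j = 0$ to the \emph{pointwise} (rather than merely almost-everywhere) vanishing of the directional derivative; this is precisely where continuity of $\nabla f_{\mathcal{K}}$ and positivity of the integrating measure on all of $\mathcal{D}$ are indispensable. A secondary point requiring care is the interpretation of ``does not change,'' which I take to mean genuine constancy along the subspace, and which relies on convexity of $\mathcal{D}$ so that the connecting segments stay admissible and the fundamental theorem of calculus applies. Once these regularity and geometry assumptions are fixed, the equivalence follows, and the result recovers the standard active-subspace characterization of a ridge function as one whose variation is confined to $\mathrm{span}(\boldsymbol{W}_{r,\mathcal{K}})$.
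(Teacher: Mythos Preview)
Your argument is correct, but it takes a different route from the paper. For the forward implication, you go pointwise immediately: from $\lambda_j=0$ you conclude $\boldsymbol{w}_j^{\transpose}\nabla f_{\mathcal{K}}\equiv 0$ for each trailing $j$, then invoke linearity of the differential to handle arbitrary $\boldsymbol{w}$ in the span, and finish with the fundamental theorem of calculus along segments. The paper instead works at the level of expectations: it considers $\mathbb{E}[df_{\mathcal{K}}^2[a\boldsymbol{w}_i+b\boldsymbol{w}_j]]$, expands the square, kills the diagonal terms by assumption, and bounds the cross term by Cauchy--Schwarz in $L^2(\mathcal{D})$ to get $|2ab|\sqrt{\lambda_i\lambda_j}=0$; only after this does it pass to the pointwise statement via continuity, and then recurses over the remaining eigenvectors. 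Your approach is more elementary---it avoids the Cauchy--Schwarz step entirely, since once the individual directional derivatives vanish pointwise the extension to the span is just linear algebra. The paper's route has the minor virtue of showing directly that the \emph{mean-squared} directional derivative along any inactive direction is zero without first unpacking the pointwise information, but in practice your argument is shorter and loses nothing. Both treatments of the converse are the same.
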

\begin{proof}
An alternative presentation of the result is offered for completeness. ($\implies$) Take a linear combination of any two eigenvectors $\boldsymbol{w}_i$ and $\boldsymbol{w}_j$ with corresponding identically zero eigenvalues, $\lambda_i$ and $\lambda_j$, for any $i,j \in \lbrace r+1, \dots, D\rbrace$. Then, for arbitrary $a,b \in \mathbb{R}$ such that $\boldsymbol{\theta} + a\boldsymbol{w}_i + b\boldsymbol{w}_j \in \mathcal{D}$ and differentiable function $f_{\mathcal{K}}:\mathcal{D} \rightarrow \mathbb{R}$,
\begin{equation*}
\begin{split}
    \vert\mathbb{E}[d f_{\mathcal{K}}^2[a\boldsymbol{w}_i + b \boldsymbol{w}_j]]\vert &\overset{(i)}{=} \vert\mathbb{E}[(a d f_{\mathcal{K}}[\boldsymbol{w}_i] + b d f_{\mathcal{K}}[\boldsymbol{w}_j])^2]\vert\\
    & \overset{(ii)}{=} \vert2ab\vert \vert\mathbb{E}[d f_{\mathcal{K}}[\boldsymbol{w}_i] d f_{\mathcal{K}}[\boldsymbol{w}_j]]\vert\\
    & \overset{(iii)}{\leq} \vert2ab\vert \left(\mathbb{E}[d f_{\mathcal{K}}^2[\boldsymbol{w}_i]] \mathbb{E}[d f_{\mathcal{K}}^2[\boldsymbol{w}_j]]\right)^{1/2}\\
    & \overset{(iv)}{=} \vert2ab\vert\sqrt{\lambda_i\lambda_j}\\
    & \overset{(v)}{=} 0.
\end{split}
\end{equation*}

The differentiability of $f_{\mathcal{K}}$ implies continuity so that $\mathbb{E}[d f_{\mathcal{K}}^2[a\boldsymbol{w}_i + b \boldsymbol{w}_j]] =0 \implies d f_{\mathcal{K}}(\boldsymbol{\theta})[a\boldsymbol{w}_i + b \boldsymbol{w}_j] = 0$ for all $\boldsymbol{\theta}$. Equality (i) follows from the linearity of the differential over directions. Equality (ii) is a simplification of the expanded quadratic that follows by assumption---i.e., $\mathbb{E}[df_{\mathcal{K}}^2[\boldsymbol{w}_i]] = \mathbb{E}[df_{\mathcal{K}}^2[\boldsymbol{w}_j]] =\lambda_i = \lambda_j = 0$ for $\boldsymbol{w}_i, \boldsymbol{w}_j \in \text{Null}(\boldsymbol{C}_{\mathcal{K}})$. Inequality (iii) is Cauchy-Schwarz for a Hilbert space of square integrable (measurable) functions. Equality (iv) follows from expression \eqref{eq:eigs} and equality (v) follows from assumption. Finally, by recursively assigning $\boldsymbol{w}_i \longleftarrow a\boldsymbol{w}_i + b\boldsymbol{w}_j$ and taking the next eigenvector from the set to be $\boldsymbol{w}_j$, we can repeat the above for all remaining eigenvectors. The converse ($\impliedby$) follows directly from \eqref{eq:eigs} and linearity of the differential.
\end{proof}


Naturally, if the trailing eigenvalues are merely small as opposed to zero, then the function changes much less over the inactive directions with smaller directional derivatives. This lends itself to a framework for reduced-dimension approximation of the function such that we only approximate changes in the function over the first $r$ active directions and take the approximation to be constant over the trailing $D-r$ inactive directions \cite{Constantine2015}. Such an approximation to $f_{\mathcal{K}}$ is called a \emph{ridge approximation} by a function $h_{\mathcal{K}}$ referred to as the \emph{ridge profile} \cite{zahm2020gradient}, i.e.,
\begin{equation}\label{eq:ridge_approx}
    f_{\mathcal{K}}(\boldsymbol{\theta}) \approx h_{\mathcal{K}}(\boldsymbol{W}_{r,\mathcal{K}}^{\transpose} \boldsymbol{\theta}).
\end{equation}
In the event that the trailing eigenvalues of $\boldsymbol{C}_{\mathcal{K}}$ are zero, then the approximation is exact for a particular $h_{\mathcal{K}}$ \cite{Constantine2015}. 

In either case, approximation or an exact ridge profile, the possibility of reducing dimension by projection to fewer, $r< D$, \emph{active coordinates} $\boldsymbol{\gamma} = \boldsymbol{W}_{r,\mathcal{K}}^{\transpose}\boldsymbol{\theta} \in \mathbb{R}^r$ can enable higher-order polynomial approximations for a given data set of coordinate-output pairs and an ability to \emph{visualize} the approximation. For example, we can visualize the approximation by projection to the active coordinates when $r$ is chosen to be $1$ or $2$ based on the decay and gaps in the eigenvalues. These subsequent visualizations are referred to as \emph{shadow plots} \cite{Grey2017} or graphs $\lbrace (\boldsymbol{W}_{r,\mathcal{K}}^{\transpose}\boldsymbol{\theta}_n, f_{\mathcal{K}}(\boldsymbol{\theta}_n))\rbrace_{n=1}^N$ for $N$ samples $\{\boldsymbol{\theta}_n\}$ drawn uniformly. A strong decay leading to a small sum of trailing eigenvalues implies an improved approximation over relatively few important directions while larger gaps in eigenvalues imply an improved approximation to the low-dimensional subspace \cite{Constantine2015}. Identifying if this structure exists depends on the decay and gaps in eigenvalues. We can subsequently exploit any reduced dimensional visualization and approximation to simplify our problem \eqref{eq:multiopt}. However, we must reconcile that our problem of interest involves two separate computations of throughput, $f_{\mathcal{L}}$ and $f_{\mathcal{W}}$. These considerations are addressed in subsection \ref{subsec:mixing}.

\subsection{Subspace Mixing}\label{subsec:mixing}
Independently approximating active subspaces for the objectives $f_{\mathcal{L}}$ and $f_{\mathcal{W}}$ generally results in different subspaces of the shared parameter domain. The next challenge is to define a common subspace that, while not active for each objective, is nevertheless sufficient to capture variability of both KPI simultaneously. Assume that we can reduce important parameter combinations to a common dimension $r$ of potentially distinct subspaces. These subspaces are spanned by the column spaces of $\boldsymbol{W}_{r,\mathcal{L}}$ and $\boldsymbol{W}_{r, \mathcal{W}}$ chosen as the first $r$ eigenvectors resulting from separate approximations of \eqref{eq:C} for LAA and Wi-Fi throughputs, respectively. The challenge is to appropriately ``mix'' the subspaces so we may formulate a solution to \eqref{eq:multiopt} over a common dimension reduction. 

One method to find an appropriate subspace mix is to take the union of both subspaces. However, if $r\geq 2$ and $\text{Range}(\boldsymbol{W}_{r,\mathcal{L}}) \cap \text{Range}(\boldsymbol{W}_{r, \mathcal{W}}) = \lbrace \boldsymbol{0} \rbrace$ then the combined subspace dimension is inflated. This inflation betrays the goals of dimension reduction and, furthermore, hinders prospects for visualization. We use interpolation between the two subspaces to overcome these difficulties and retain the common reduction to an $r$-dimensional subspace. The space of all $r$-dimensional subspaces in $\mathbb{R}^D$ is the $r(D-r)$-dimension Grassmann manifold (Grassmannian\footnote{Formally, an element of the Grassmannian is an equivalence class, $[\boldsymbol{U}_r]$, of all orthogonal matrices whose first $r$ columns span the same subspace as $\boldsymbol{U}_r \in \mathbb{R}^{D\times r}$. That is, the equivalence relation $X \sim Y$ is given by $\text{Range}(X) =\text{Range}(Y)$ denoted $[X]$ or $[Y]$ for $X,Y \in \mathbb{R}^{D\times r}$ full rank with orthonormal columns.}) denoted $\text{Gr}(r,D)$ \cite{edelman1998geometry}. Utilizing the analytic form of a geodesic over the Grassmannian \cite{edelman1998geometry}, we can smoothly interpolate from one subspace to another---an interpolation that is, in general, non-linear. This is particularly useful because the \emph{distance} \cite{edelman1998geometry} between any two subspaces along such a path, 
$
[\boldsymbol{U}_r]:\mathbb{R} \rightarrow \text{Gr}(r, D):s \mapsto [\boldsymbol{U}_r(s)] \,\, \text{for all}\,\, s \in [0,1],
$
minimizes the distance between the two subspaces $\text{Range}(\boldsymbol{W}_{r, \mathcal{L}}), \text{Range}(\boldsymbol{W}_{r, \mathcal{W}}) \in \text{Gr}(r,D)$ defining the geodesic. That is, the geodesic $[\boldsymbol{U}_r(s)]$ minimizes the distance between $\text{Range}(\boldsymbol{W}_{r, \mathcal{L}})$ and $\text{Range}(\boldsymbol{W}_{r, \mathcal{W}})$ while still constituting an $r$-dimensional subspace in $\mathbb{R}^D$.

In an effort to improve the ridge approximations while retaining the ability to visualize the response and trace of the convex quadratic polynomial, we fix $r=2$ and mix the subspaces according to a quadratic approximation with corresponding coefficients of determination $R^2_{\mathcal{L}}$ and $R^2_{\mathcal{W}}$. These are computed over the Grassmannian geodesic using representative subspace coordinates and throughput (coordinate-output) pairs, i.e.,
\begin{equation}
    R^2_{\mathcal{K}}(s) = 1 - \frac{\sum_{n=1}^N(f_{\mathcal{K}}(\boldsymbol{\theta}_n) - h_\mathcal{K}(\boldsymbol{U}_r^{\transpose}(s)\boldsymbol{\theta}_n))^2}{\sum_{n=1}^N(f_{\mathcal{K}}(\boldsymbol{\theta}_n) - 1/N\sum_{n=1}^N f_{\mathcal{K}}(\boldsymbol{\theta}_n))^2}
\end{equation}
for either network represented by $\mathcal{K}$ and quadratic ridge profiles $h_{\mathcal{K}}$. Moreover, the dimension reduction down to $r=2$ is anticipated to help regularize fits $h_{\mathcal{K}}$ and provide more stable approximations resulting from a quadratic trace. We select a criterion to mix subspaces achieving a balanced approximation when $r=2$. This offers the subproblem,
\begin{equation} \label{eq:sub_multi}
    \underset{s \in [0,1]}{\text{maximize}}\,\,\min\lbrace R^2_{\mathcal{L}}(s), R^2_{\mathcal{W}}(s)\rbrace,
\end{equation}
where the separate throughput coefficients of determination are parametrized over a consistent set of subspace coordinates $\boldsymbol{\gamma}_n = \boldsymbol{U}^{\transpose}_2(s)\boldsymbol{\theta}_n$ for all $n=1,\dots,N$ and quadratic ridge profiles $h_{\mathcal{K}}$.

\subsection{Tracing Ridge Profiles}
After approximating $\boldsymbol{W}_{r, \mathcal{L}}$ and $\boldsymbol{W}_{r, \mathcal{W}}$ we must make an informed decision to take the union of subspaces or compute a new subspace $\text{Range}(\boldsymbol{U}_r)$ against some criteria parametrized over the Grassmannian geodesic. Then we may restate the original problem with a common dimension reduction, $\boldsymbol{\gamma} = \boldsymbol{U}_r^{\transpose}\boldsymbol{\theta}$, utilizing updated approximations over $r < D$ mixed active coordinates,
\begin{equation} \label{eq:r_multiopt}
    \underset{\boldsymbol{\gamma} \in \mathcal{Y}}{\text{maximize}} \,\,th_{\mathcal{L}}(\boldsymbol{\gamma}; \boldsymbol{x}) + (1-t)h_{\mathcal{W}}(\boldsymbol{\gamma}; \boldsymbol{x}),
\end{equation}
for all $t \in [0,1]$. Although somewhat intuitive, we formalize stationarity of \eqref{eq:r_multiopt} as a Corollary (dropping the scenario):
\begin{corr} \label{corr:ridge_trace}
If $h_{\mathcal{L}}$ and $h_{\mathcal{W}}$ are ridge functions over a mutual $r$-subspace $\text{Range}(\boldsymbol{U}_r)$, $\boldsymbol{U}_r \in \mathbb{R}^{D \times r}$ such that $\boldsymbol{U}_r^{\transpose}\boldsymbol{U}_r = I_r$, then the necessary conditions of Prop. \ref{prop:ODE} map to subspace coordinates.
\end{corr}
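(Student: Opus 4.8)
The plan is to push the ridge structure $f_{\mathcal{K}}(\boldsymbol{\theta}) = h_{\mathcal{K}}(\boldsymbol{U}_r^{\transpose}\boldsymbol{\theta})$ through the chain rule and thereby carry every object appearing in Prop.~\ref{prop:ODE} down to the active coordinates $\boldsymbol{\gamma} = \boldsymbol{U}_r^{\transpose}\boldsymbol{\theta} \in \mathbb{R}^r$. First I would record the two identities implied by the ridge assumption. The map $\boldsymbol{\theta} \mapsto \boldsymbol{\gamma}$ is linear with constant Jacobian $\boldsymbol{U}_r^{\transpose}$, so differentiating $f_{\mathcal{K}} = h_{\mathcal{K}} \circ \boldsymbol{U}_r^{\transpose}$ once gives $\nabla f_{\mathcal{K}}(\boldsymbol{\theta}) = \boldsymbol{U}_r \nabla_{\boldsymbol{\gamma}} h_{\mathcal{K}}(\boldsymbol{\gamma})$ and twice gives $\nabla^2 f_{\mathcal{K}}(\boldsymbol{\theta}) = \boldsymbol{U}_r \nabla^2_{\boldsymbol{\gamma}} h_{\mathcal{K}}(\boldsymbol{\gamma}) \boldsymbol{U}_r^{\transpose}$ (no second-order term from the linear inner map). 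Since the full scalarization $\varphi_t$ and the reduced one $\psi_t(\boldsymbol{\gamma}) \overset{\Delta}{=} (1-t)h_{\mathcal{W}}(\boldsymbol{\gamma}) + t\,h_{\mathcal{L}}(\boldsymbol{\gamma})$ agree as functions, the same factorizations hold: $\nabla \varphi_t = \boldsymbol{U}_r \nabla_{\boldsymbol{\gamma}} \psi_t$ and $\nabla^2 \varphi_t = \boldsymbol{U}_r \nabla^2_{\boldsymbol{\gamma}} \psi_t \, \boldsymbol{U}_r^{\transpose}$.

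Next I would translate stationarity and then the trace ODE. Because $\boldsymbol{U}_r$ has orthonormal columns and hence trivial kernel, the condition $\nabla \varphi_t = \boldsymbol{U}_r \nabla_{\boldsymbol{\gamma}} \psi_t = \boldsymbol{0}$ is equivalent to $\nabla_{\boldsymbol{\gamma}} \psi_t = \boldsymbol{0}$, so the reduced stationarity condition is genuinely necessary. For the dynamics, I would substitute the chain-rule expressions into the Prop.~\ref{prop:ODE} system $\nabla^2 \varphi_t\,\dot{\boldsymbol{\theta}} = \nabla f_{\mathcal{W}} - \nabla f_{\mathcal{L}}$ to get $\boldsymbol{U}_r \nabla^2_{\boldsymbol{\gamma}}\psi_t \,\boldsymbol{U}_r^{\transpose}\dot{\boldsymbol{\theta}} = \boldsymbol{U}_r\bigl(\nabla_{\boldsymbol{\gamma}} h_{\mathcal{W}} - \nabla_{\boldsymbol{\gamma}} h_{\mathcal{L}}\bigr)$. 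Left-multiplying by $\boldsymbol{U}_r^{\transpose}$, using $\boldsymbol{U}_r^{\transpose}\boldsymbol{U}_r = I_r$, and noting $\dot{\boldsymbol{\gamma}} = \boldsymbol{U}_r^{\transpose}\dot{\boldsymbol{\theta}}$ (since $\boldsymbol{U}_r$ is constant in $t$) yields $\nabla^2_{\boldsymbol{\gamma}}\psi_t(\boldsymbol{\gamma}(t))\,\dot{\boldsymbol{\gamma}}(t) = \nabla_{\boldsymbol{\gamma}} h_{\mathcal{W}}(\boldsymbol{\gamma}(t)) - \nabla_{\boldsymbol{\gamma}} h_{\mathcal{L}}(\boldsymbol{\gamma}(t))$, which is precisely the Prop.~\ref{prop:ODE} system written in active coordinates.

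The hard part, and really the substance of the corollary, is the rank bookkeeping. The factorization $\nabla^2 \varphi_t = \boldsymbol{U}_r \nabla^2_{\boldsymbol{\gamma}}\psi_t\,\boldsymbol{U}_r^{\transpose}$ shows that the full-space Hessian has rank at most $r < D$, so the full-rank hypothesis that Prop.~\ref{prop:ODE} uses to invert $\nabla^2 \varphi_t$ can \emph{never} hold for a genuine ridge function. Thus the corollary is not a trivial restatement: the left-multiplication by $\boldsymbol{U}_r^{\transpose}$ is exactly the projection onto $\text{Range}(\boldsymbol{U}_r)$ that discards the unconstrained inactive components of $\dot{\boldsymbol{\theta}}$ and leaves a system whose only nondegeneracy requirement is invertibility of the $r \times r$ reduced Hessian $\nabla^2_{\boldsymbol{\gamma}}\psi_t$, an attainable condition. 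I would close by observing that both sides of the substituted equation lie in $\text{Range}(\boldsymbol{U}_r)$, so equating preimages under the injective $\boldsymbol{U}_r$ is legitimate and no information is lost in the projection; the flowout argument of Prop.~\ref{prop:ODE} then applies verbatim in $\mathbb{R}^r$, delivering an immersed submanifold of locally Pareto optimal active coordinates.
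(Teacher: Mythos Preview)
Your argument is correct and follows essentially the same route as the paper: chain-rule the ridge structure to get $\nabla f_{\mathcal{K}} = \boldsymbol{U}_r\nabla_{\boldsymbol{\gamma}} h_{\mathcal{K}}$ and $\nabla^2 f_{\mathcal{K}} = \boldsymbol{U}_r\nabla^2_{\boldsymbol{\gamma}} h_{\mathcal{K}}\boldsymbol{U}_r^{\transpose}$, substitute into the Prop.~\ref{prop:ODE} system, left-multiply by $\boldsymbol{U}_r^{\transpose}$, and identify $\dot{\boldsymbol{\gamma}} = \boldsymbol{U}_r^{\transpose}\dot{\boldsymbol{\theta}}$. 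Your additional remarks on the rank degeneracy of $\nabla^2\varphi_t$ and the role of the projection in recovering a well-posed $r\times r$ system are a worthwhile clarification that the paper's proof leaves implicit.
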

\begin{proof}
By assumption, $f_{\mathcal{K}}(\boldsymbol{\theta}) = h(\boldsymbol{U}_r^{\transpose}\boldsymbol{\theta})$ for either objective $\mathcal{K}$. Consequently, the gradient of $f_{\mathcal{K}}$ with respect to parameters $\boldsymbol{\theta}$ is spanned by the subspace $\text{Range}(\boldsymbol{U}_r)$,
$$
\nabla_{\boldsymbol{\theta}}f_{\mathcal{K}} = \boldsymbol{U}_r\nabla_{\boldsymbol{\gamma}}h_{\mathcal{K}} \implies \boldsymbol{U}_r^{\transpose}\nabla_{\boldsymbol{\theta}}f_{\mathcal{K}} = \nabla_{\boldsymbol{\gamma}}h_{\mathcal{K}}.
$$
Similarly, the Hessian becomes $
\nabla^2_{\boldsymbol{\theta}}f_{\mathcal{K}} = \boldsymbol{U}_r\nabla^2_{\boldsymbol{\gamma}}h_{\mathcal{K}}\boldsymbol{U}_r^{\transpose}$. Rewriting the necessary conditions per Prop. \ref{prop:ODE} such that
$$
\nabla^2_{\boldsymbol{\theta}} \varphi_t(\boldsymbol{\theta}(t)) =\boldsymbol{U}_r\left((1-t)\nabla^2_{\boldsymbol{\gamma}}h_{\mathcal{W}}(t) + t\nabla^2_{\boldsymbol{\gamma}}h_{\mathcal{L}}(t)\right)\boldsymbol{U}_r^{\transpose},
$$
and assigning $\nabla^2_{\boldsymbol{\gamma}} \tilde{\varphi}(t) = (1-t)\nabla^2_{\boldsymbol{\gamma}}h_{\mathcal{W}}(t) + t\nabla^2_{\boldsymbol{\gamma}}h_{\mathcal{L}}(t)$ implies
$$
\boldsymbol{U}_r\nabla^2_{\boldsymbol{\gamma}} \tilde{\varphi}(t)\boldsymbol{U}_r^{\transpose}\dot{\boldsymbol{\theta}}(t) = \nabla_{\boldsymbol{\theta}}f_{\mathcal{W}}(t) - \nabla_{\boldsymbol{\theta}}f_{\mathcal{L}}(t).
$$
For active coordinates $\boldsymbol{\gamma}(t) = \boldsymbol{U}_r^{\transpose}\boldsymbol{\theta}(t)$, the necessary conditions simplify given $\boldsymbol{U}_r^{\transpose}\nabla_{\boldsymbol{\theta}}f_{\mathcal{K}} = \nabla_{\boldsymbol{\gamma}}h_{\mathcal{K}}$,
$$
\nabla^2_{\boldsymbol{\gamma}} \tilde{\varphi}(t)\dot{\boldsymbol{\gamma}}(t) = \nabla_{\boldsymbol{\gamma}}h_{\mathcal{W}}(t) - \nabla_{\boldsymbol{\gamma}}h_{\mathcal{L}}(t),
$$
according to pseudo-time derivative that commutes with matrix multiplication, $\dot{\boldsymbol{\gamma}}(t) = \boldsymbol{U}_r^{\transpose}\dot{\boldsymbol{\theta}}(t)$.
\end{proof}
Once again, this optimization problem involves a closed and bounded feasible domain of parameter values $\mathcal{Y} = \lbrace \boldsymbol{\gamma} \in \mathbb{R}^r \,:\, \boldsymbol{\gamma} = \boldsymbol{U}_r^{\transpose}\boldsymbol{\theta},\,\, \boldsymbol{\theta}\in \mathcal{D}\rbrace$ that remains convex for convex $\mathcal{D}$ and a new subspace $\text{Range}(\boldsymbol{U}_r)$---for hypercube $\mathcal{D}$, $\mathcal{Y}$ is referred to as a \emph{zonotope} \cite{Constantine2015, fukuda2004zonotope}. The utility of the dimension reduction is the ability to formulate a continuous trace of the Pareto front \cite{bolten2020tracing}---involving the inverse of a convex combination of Hessians---in fewer dimensions. In the context of a low-dimensional quadratic trace, the ridge approximations are $h_{\mathcal{K}}(\boldsymbol{\gamma}; \boldsymbol{x}) = \boldsymbol{\gamma}^{\transpose}\boldsymbol{Q}_{\mathcal{K}}\boldsymbol{\gamma} + \boldsymbol{a}_{\mathcal{K}}^{\transpose}\boldsymbol{\gamma} + c_{\mathcal{K}}$ given by $\mathcal{CVX}\left( \{(\boldsymbol{U}_r^{\transpose}\boldsymbol{\theta}_n, f_{\mathcal{K}}(\boldsymbol{\theta}_n; \boldsymbol{x}))\}\right)$ for corresponding network $\mathcal{K}$. Consequently, the resulting active coordinate trace is given by replacing $\boldsymbol{\theta}(t; \boldsymbol{x})$ with the parametrization $\boldsymbol{\gamma}(t; \boldsymbol{x})$ in \eqref{eq:quad_trace} for all $\boldsymbol{\gamma}(t; \boldsymbol{x}) \in \mathcal{Y}$. This is supplemented by visualization in the case $r=1$ or $r=2$ providing \emph{empirical evidence of convexity} and the ability to \emph{visualize the resulting trace}---a powerful tool for facilitating \emph{exploration}, \emph{explainability}, and \emph{comparison}.

However, given the transformation to active coordinates, we are now afforded the flexibility of selecting from an infinite number of inactive coordinates orthogonal to the quadratic trace of \eqref{eq:r_multiopt}. Precisely, restricting approximations to an $r$-dimensional subspace defines a submanifold of approximately Pareto optimal solutions $\mathcal{M} \subseteq \mathbb{R}^D$ given as the product manifold $\mathcal{M} = \boldsymbol{\gamma}([0,1]; \boldsymbol{x}) \times \mathcal{Z}_{\boldsymbol{\gamma}}$ where $\boldsymbol{\gamma}([0,1]; \boldsymbol{x})$ is the image of the quadratic trace over active coordinates and $\mathcal{Z}_{\boldsymbol{\gamma}} = \{\boldsymbol{\zeta} \in \mathbb{R}^{D-r} \,:\, \boldsymbol{U}_r\boldsymbol{\gamma} + \boldsymbol{U}^{\perp}_r\boldsymbol{\zeta} \in \mathcal{D}\}$. This offers a parametrization over a ($D-r+1$)-submanifold of approximately Pareto optimal solutions,
\begin{equation} \label{eq:Pareto_mfld}
    \boldsymbol{\theta}(t, \boldsymbol{\zeta}; \boldsymbol{x}) = \boldsymbol{U}_r\boldsymbol{\gamma}(t; \boldsymbol{x}) + \boldsymbol{U}^{\perp}_r\boldsymbol{\zeta}, 
\end{equation}
for all $\boldsymbol{\zeta} \in \mathcal{Z}_{\boldsymbol{\gamma}(t; \boldsymbol{x})}$. Provided we remain off the boundary of $\mathcal{M}$ such that we are only interested in a trace over the interior of $\mathcal{D}$, we can formulate a \emph{geodesic} over $\mathcal{M}$ of near Pareto optimal solutions, 
\begin{equation} \label{eq:Pareto_geo}
    \boldsymbol{\theta}(t; \boldsymbol{x}) = \boldsymbol{U}_r\boldsymbol{\gamma}(t; \boldsymbol{x}) + \boldsymbol{U}^{\perp}_r\boldsymbol{\zeta}(t),
\end{equation}
where $\boldsymbol{\zeta}(t)$ is a straight-line segment (convex combination) over the Euclidean (zero-curvature) portion of the submanifold $\mathcal{M}$. It is conceivable to parametrize this line segment as the convex combination $\boldsymbol{\zeta}(t) = t\boldsymbol{\zeta}_1 + (1-t)\boldsymbol{\zeta}_0$ where 
$\boldsymbol{\zeta}_0 = \underset{\boldsymbol{\zeta} \in \mathcal{Z}_{\boldsymbol{\gamma}(0; \boldsymbol{x})}}{\text{argmin}}\left(f_{\mathcal{W}}(\boldsymbol{U}_r\boldsymbol{\gamma}(0; \boldsymbol{x}) + \boldsymbol{U}^{\perp}_r\boldsymbol{\zeta}; \boldsymbol{x}) \right)$
and
$\boldsymbol{\zeta}_1 = \underset{\boldsymbol{\zeta} \in \mathcal{Z}_{\boldsymbol{\gamma}(1; \boldsymbol{x})}}{\text{argmin}}\left(f_{\mathcal{L}}(\boldsymbol{U}_r\boldsymbol{\gamma}(1; \boldsymbol{x}) + \boldsymbol{U}^{\perp}_r\boldsymbol{\zeta}; \boldsymbol{x}) \right)$
are the left ($t=0$) and right ($t=1$) inactive solutions, respectively. However, the subsequent inactive geodesic is expected to change throughputs significantly less---resulting in marginal improvements corresponding to inaccuracies induced by a choice of convex quadratic ridge profile.

Lastly, we write the image of throughputs over specific choices of the map $\boldsymbol{\theta}(t; \boldsymbol{x})$ as planar curves through Pareto space (fronts) $\mathcal{P} = \{\boldsymbol{f} \in \mathbb{R}^2 \,:\, \boldsymbol{f} = (f_{\mathcal{W}}, f_{\mathcal{L}})\circ \boldsymbol{\theta}, \, \boldsymbol{\theta} \in \mathcal{D} \}$ containing the true (unknown) Pareto front. In this study, we consider three choices for attempting to approximate the Pareto front: (i) the \emph{geodesic front}
\begin{equation} \label{eq:geo_front}
    \{\boldsymbol{f}(t; \boldsymbol{x}) \in \mathbb{R}^2 \,:\, \boldsymbol{f}(t; \boldsymbol{x})= (f_{\mathcal{W}}, f_{\mathcal{L}} )\circ \boldsymbol{\theta}(t; \boldsymbol{x}) \}
\end{equation}
per \eqref{eq:Pareto_geo}, (ii) the \emph{linear front}
\begin{equation} \label{eq:lin_front}
    \{ \boldsymbol{f}(t; \boldsymbol{x}) \in \mathbb{R}^2 \,:\, \boldsymbol{f}(t; \boldsymbol{x}) = (f_{\mathcal{W}} , f_{\mathcal{L}} )\circ \boldsymbol{\ell}(t; \boldsymbol{x})\}
\end{equation}
where $\boldsymbol{\ell}(t; \boldsymbol{x}) = (1-t)\boldsymbol{\theta}_0 + t\boldsymbol{\theta}_1 $ defined by approximated left solution $\boldsymbol{\theta}_0$ (resp. right solution $\boldsymbol{\theta}_1$) to \eqref{eq:multiopt} as a geodesic over Euclidean Pareto manifold, and (iii) the \emph{conditional front}
\begin{multline} \label{eq:cond_front}
    \biggl\lbrace \boldsymbol{f}(t; \boldsymbol{x}) \in \mathbb{R}^2 \,:\, \boldsymbol{f}(t; \boldsymbol{x}) =\\ \int_{\mathcal{Z}_{\boldsymbol{\gamma}(t; \boldsymbol{x})}}(f_{\mathcal{W}}, f_{\mathcal{L}} )\circ \boldsymbol{\theta}(t, \boldsymbol{\zeta}; \boldsymbol{x}) dV(\boldsymbol{\zeta} \vert \boldsymbol{\gamma}(t; \boldsymbol{x}))\biggl\rbrace
\end{multline}
for conditional integral measure over inactive coordinates, $dV(\boldsymbol{\zeta} \vert \boldsymbol{\gamma}(t; \boldsymbol{x}))$. The resulting approximations and visualizations are summarized in Section \ref{sec:Numerics}.

\subsection{Computational Considerations}
In order to approximate the eigenspaces of $\boldsymbol{C}_{\mathcal{L}}$ and $\boldsymbol{C}_{\mathcal{W}}$ for the separate responses \eqref{eq:throughputs} we must first approximate the gradients of the network throughput responses that are not available in an analytic form. Specifically, we use forward finite difference approximations to approximate the partial derivatives in \eqref{eq:C}. These computations are supplemented by a rescaling of all parameters to a unit-less domain that permits consistent finite-difference step sizes.

The rescaling transformation is chosen based on the upper and lower bounds, summarized in Table \ref{Tb1}. This ensures that the scale of any one parameter does not influence  finite difference approximations. Moreover, this alleviates the need for an interpretation or justification when taking linear combinations of parameters with differing units. Because the throughput calculations involve parameter combinations appearing as exponents in the composition of a variety of computations, we use a uniform sampling of log-scaled parameter values. This transforms parameters appearing as exponents to appear as coefficients---a useful transformation given that we ultimately seek an approximation of linear combinations of parameters inherent to the definition of a subspace.

The resulting scaling of the domain is achieved by the composition of transformations
$\boldsymbol{\tilde{\theta}} = \boldsymbol{M}\ln(\boldsymbol{\theta}) + \boldsymbol{b}$
where $ \boldsymbol{M} = \text{diag}(2/(\ln(\boldsymbol{\theta}_{u})_1 - \ln(\boldsymbol{\theta}_{\ell})_1),\dots,2/(\ln(\boldsymbol{\theta}_{u})_D - \ln(\boldsymbol{\theta}_{\ell})_D))$, $\boldsymbol{b} = -\boldsymbol{M}\mathbb{E}[\ln(\boldsymbol{\theta})]$, and $\ln(\cdot)$ is taken component-wise. To compute this transformation, we take $(\boldsymbol{\theta}_{\ell})_i$ and $(\boldsymbol{\theta}_{u})_i$ as the $i$-th entries of the lower and upper bounds and $\mathbb{E}[\ln(\boldsymbol{\theta})]$ the mean of $\ln(\boldsymbol{\theta}) \sim U_D[\ln(\boldsymbol{\theta}_{\ell}), \ln(\boldsymbol{\theta}_u)]$. This particular choice of scaling ensures $\boldsymbol{\tilde{\theta}}\in [-1, 1]^D$ and $\mathbb{E}[\boldsymbol{\tilde{\theta}}] = \boldsymbol{0}$ so the resulting domain is also centered. Lastly, we use Monte Carlo as quadrature to approximate the $\boldsymbol{C}_{\mathcal{K}}$ matrices for both throughputs defined by the integral \eqref{eq:C}, for the two throughputs. The details are provided as Algorithm \ref{alg1}. All approximations, optimizations, and traces mentioned in this work are benefited by these scaling considerations. 

\begin{algorithm}[!ht]
\caption{Monte Carlo Approximation of Throughput Active Subspaces using Forward Differences}
\small
    \begin{algorithmic}[1]\label{alg1}
	\REQUIRE Forward maps $f_{\mathcal{L}}$ and $f_{\mathcal{W}}$, small coordinate perturbation $\delta \geq 0$, fixed scenario parameters $\boldsymbol{x}$, parameter bounds $\boldsymbol{\theta}_{\ell}$, $\boldsymbol{\theta}_{u} \in \mathbb{R}^D$, and the number of Monte Carlo samples $N$.
    \STATE Generate $N$ i.i.d. uniform samples, $\lbrace \boldsymbol{\tilde{\theta}}_n\rbrace_{n=1}^N \sim U_D[-1,1]$.
    \STATE Compute $\boldsymbol{M}$, $\boldsymbol{M}^{-1}$, and $\boldsymbol{b}$ according to a uniform distribution of log-scale parameters given $\boldsymbol{\theta}_{\ell}$ and $\boldsymbol{\theta}_{u}$.
    \FOR{$n=1$ to $N$}
    \STATE Transform the uniform log-scale sample to the original scale $\boldsymbol{\theta}_n = \text{exp}(\boldsymbol{M}^{-1}(\boldsymbol{\tilde{\theta}}_n - \boldsymbol{b}))$ where the exponential is taken component-wise.
    \STATE Evaluate forward maps $(f_{\mathcal{L}})_n = f_{\mathcal{L}}(\boldsymbol{\theta}_n; \boldsymbol{x})$ and $(f_{\mathcal{W}})_n = f_{\mathcal{W}}(\boldsymbol{\theta_n}; \boldsymbol{x})$.
    \FOR {$i=1$ to $D$}
    \STATE Transform the $i$-th coordinate perturbation to the original input scale
    $\boldsymbol{\theta}_{\delta} = \text{exp}(\boldsymbol{M}^{-1}(\boldsymbol{\tilde{\theta}}_n + \delta\boldsymbol{e}_i - \boldsymbol{b}))$, where $\boldsymbol{e}_i$ is the $i$-th column of the $D$-by-$D$ identity matrix.
    \STATE Approximate the $i$-th entry of the gradient ($i$-th partial) at the $n$-th sample as
    $$
    (\tilde{\nabla} f_{\mathcal{L}})_{n,i} = \frac{f_{\mathcal{L}}(\boldsymbol{\theta}_{\delta}; \boldsymbol{x}) - (f_{\mathcal{L}})_n}{\delta},
    $$
    similarly for $(\tilde{\nabla} f_{\mathcal{W}})_{n,i}$.
    \ENDFOR
    \ENDFOR
    \STATE Compute the average of the outer product of approximate gradients as
    $$
    \boldsymbol{\tilde{C}}_{\mathcal{L}} = \frac{1}{N}\sum_{n=1}^N (\tilde{\nabla} f_{\mathcal{L}})_{n,:} \otimes (\tilde{\nabla} f_{\mathcal{L}})_{n,:},
    $$
    similarly for $\boldsymbol{\tilde{C}}_{\mathcal{W}}$, where the tensor (outer) product is taken over the $i$-th index replaced by $:$.
    \STATE Compute the eigenvalue decompositions $$\boldsymbol{\tilde{C}}_{\mathcal{L}} = \boldsymbol{\tilde{W}}_{\mathcal{L}}\boldsymbol{\tilde{\Lambda}}_{\mathcal{L}}\boldsymbol{\tilde{W}}_{\mathcal{L}}^{\transpose} \quad \text{and}\quad \boldsymbol{\tilde{C}}_{\mathcal{W}} = \boldsymbol{\tilde{W}}_{\mathcal{W}}\boldsymbol{\tilde{\Lambda}}_{\mathcal{W}}\boldsymbol{\tilde{W}}_{\mathcal{W}}^{\transpose}$$
    ordered by decreasing eigenvalues.
    \STATE Observe the eigenvalue decay and associated gaps to inform a reasonable choice of $r$.
    \RETURN The first $r$ columns of $\boldsymbol{\tilde{W}}_{\mathcal{L}}$ and $\boldsymbol{\tilde{W}}_{\mathcal{W}}$, denoted $\boldsymbol{\tilde{W}}_{r,\mathcal{L}}$ and $\boldsymbol{\tilde{W}}_{r,\mathcal{W}}$.
    \end{algorithmic}
\end{algorithm}
\normalsize

The selection of $r$ in Algorithm \ref{alg1} can be automated by, for example, a heuristic that takes the largest gap in eigenvalues \cite{Constantine2015} or the largest gap occurring after thresholding the sum of eigenvalues. For simplicity, we take an exploratory approach to selecting $r$ which requires user-input. We seek a visualization of the response to provide empirical---explainable---evidence that the throughputs are predominantly convex (convex ridge profiles) and hence require $r\leq 2$. We then check that the result offers acceptable approximations of throughputs with sufficient gaps in the second and third eigenvalues suggesting reasonable subspace approximations.

\subsection{Stretch Sampling} \label{subsec:stretch}
Projection of randomly sampled points $\{ \boldsymbol{\theta}_n \}$ to subspace coordinates is conflated by a \emph{concentration of distances} phenomenon. This is demonstrated empirically, for our case, in Fig. \ref{fig:stretch_smpl}. Generally, for otherwise unknown (random) subspace, we anticipate---with high probability---that distances between our random samples will uniformly scale by a factor of $\sqrt{r/D}$ for a random projection to subspace coordinates \cite{EFTEKHARI20111589, baraniuk2009random}. Hence, given an otherwise unknown set of subspaces to be mixed resulting from Algorithm \ref{alg1}, we desire a procedure to systematically improve samples that \emph{may concentrate} over the reduced dimension subspace.

To inform a supplementary space-filling design that produces additional samples for $\mathcal{CVX}\left( \{(\boldsymbol{U}_r^{\transpose}\boldsymbol{\theta}_n, f_{\mathcal{K}}(\boldsymbol{\theta}_n; \boldsymbol{x}))\}\right)$ such that $r\leq 2$, we implement a heuristic procedure referred to as \emph{stretch sampling}. In brief, we systematically sample active coordinates supplemented by random draws over corresponding inactive coordinates, $\mathcal{Z}_{\boldsymbol{\gamma}}$. Given data $\{ \boldsymbol{\theta}_n\}$ and a basis for the mixed subspace $\boldsymbol{U}_r$, we consider the boundary of the convex hull of all projected samples $\{ \boldsymbol{U}_r^{\transpose}\boldsymbol{\theta}_n\}$ and zonotope boundary $\partial \mathcal{Y}$. Utilizing linear interpolation of an ordered (clockwise or counterclockwise) set of vertices from either boundary, we take a chosen number of points to uniformly discretize---e.g., compute $25$ points per boundary via piecewise linear interpolation of the ordered set of vertices over separate boundaries. The uniformly sampled boundary points define a Delaunay triangulation, and we return corresponding Voronoi centers not contained in the interior of the projected-data convex hull (exterior Voronoi centers) as new samples that are \emph{stretched} beyond the extent of the projected data. These new samples---aggregated with the original data---fill out the remainder of the zonotope while inactive coordinates are drawn randomly from $\mathcal{Z}_{\boldsymbol{\gamma}}$ \cite{Constantine2015} at each new sample. The result produces improved sampling to mitigate concerns of extrapolating surrogates over the reduced dimension subspace. A visualization of the new samples using $25$ points per boundary is shown in Fig. \ref{fig:stretch_smpl}.

\begin{figure}[ht] 
    \centering
    \includegraphics[width=0.9\linewidth]{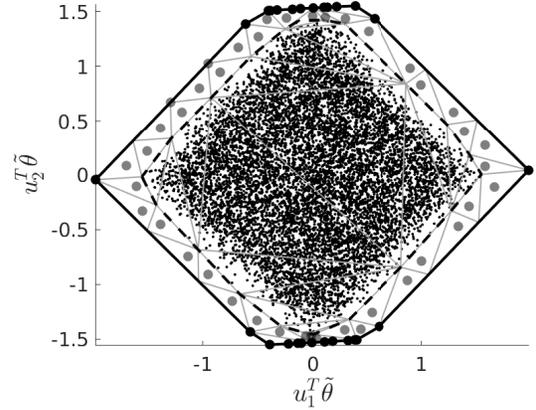}
    \hfill
    \caption{Stretch sampling a two-dimensional zonotope. The boundary of the zonotope $\mathcal{Y}$ (solid black curve) is depicted along with the boundary of the convex hull of data (dashed-black curve) and $N=10,000$ projected random samples (black dots). Axes correspond to independent active coordinates. The resulting triangulation (gray mesh) is informed by $25$ uniformly sampled points per boundary and exterior Voronoi centers (gray dots) constitute the new stretch samples used to improve the ridge approximations.}
    \label{fig:stretch_smpl}
\end{figure}

\section{Simulation and Results} \label{sec:Numerics}
\begin{figure*}[ht]
\begin{subfigure}{.5\textwidth}
  \centering
  \includegraphics[width=.8\linewidth]{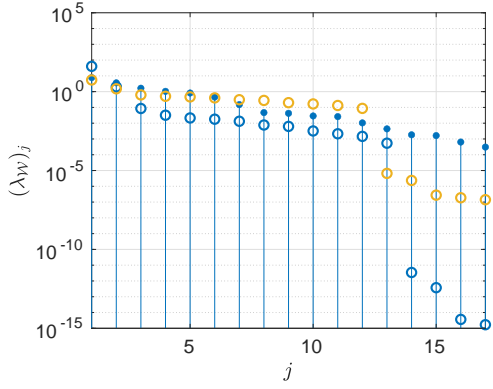}
  \label{fig:wifi_eigs}
  \hfill
  \caption{Wi-Fi Decay of Eigenvalues}
\end{subfigure}
\begin{subfigure}{.5\textwidth}
  \centering
  \includegraphics[width=.8\linewidth]{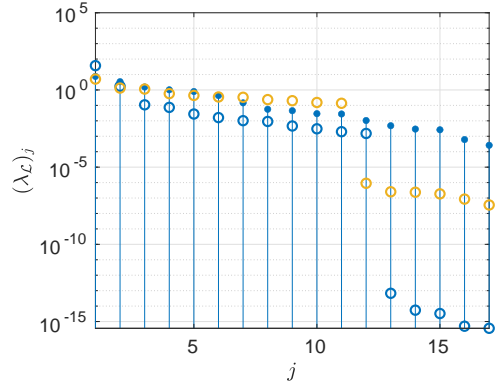}
  \caption{LAA Decay of Eigenvalues}
  \label{fig:laa_eigs}
\end{subfigure}
\caption{Comparison of the eigenvalue decay for Wi-Fi and LAA throughputs. The eigenvalues resulting from Algorithm \ref{alg1} are shown as the blue stem plot (solid blue dots) corresponding to $N=10,000$ Monte Carlo samples. This is contrasted with eigenvalues of the convex quadratic Hessian (yellow circles) computed using the $N=10,000$ unperturbed random function evaluations defining the convex problem \eqref{eq:global_spd_quad} over the full-dimensional parameter space. Additionally, the eigenvalues of the active subspaces resulting from the full-dimensional convex quadratic fit as a surrogate (blue circles) is contrasted to the unbiased finite difference approach (solid blue dots).}
\label{fig:eignevalues}
\end{figure*}
\normalsize


\begin{figure}[ht]
    \centering
    \includegraphics[width=0.9\linewidth]{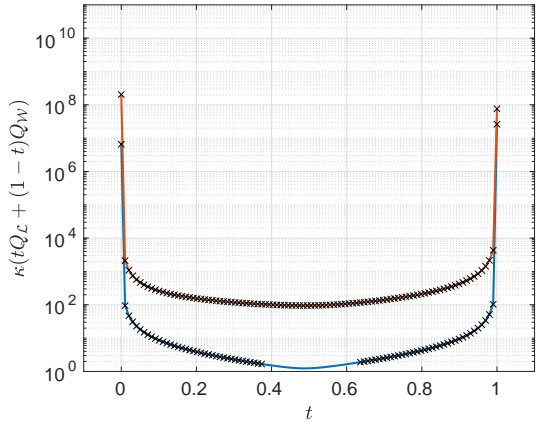}
    \hfill
    \caption{Condition number of convex combinations of quadratic Hessians. The convex combination resulting from \eqref{eq:global_spd_quad} over the full dimensional space (orange) is contrasted with the significantly improved conditioning over the mixed two-dimensional subspace (blue). Results are depicted at $100$ points along the corresponding quadratic traces \eqref{eq:quad_trace}. Black crosses indicate points along the trace that do not pass through the domain.}
    \label{fig:cvx_condition}
\end{figure}

\begin{figure*}[ht]
\begin{subfigure}{.5\textwidth}
  \centering
  \includegraphics[width=0.85\linewidth]{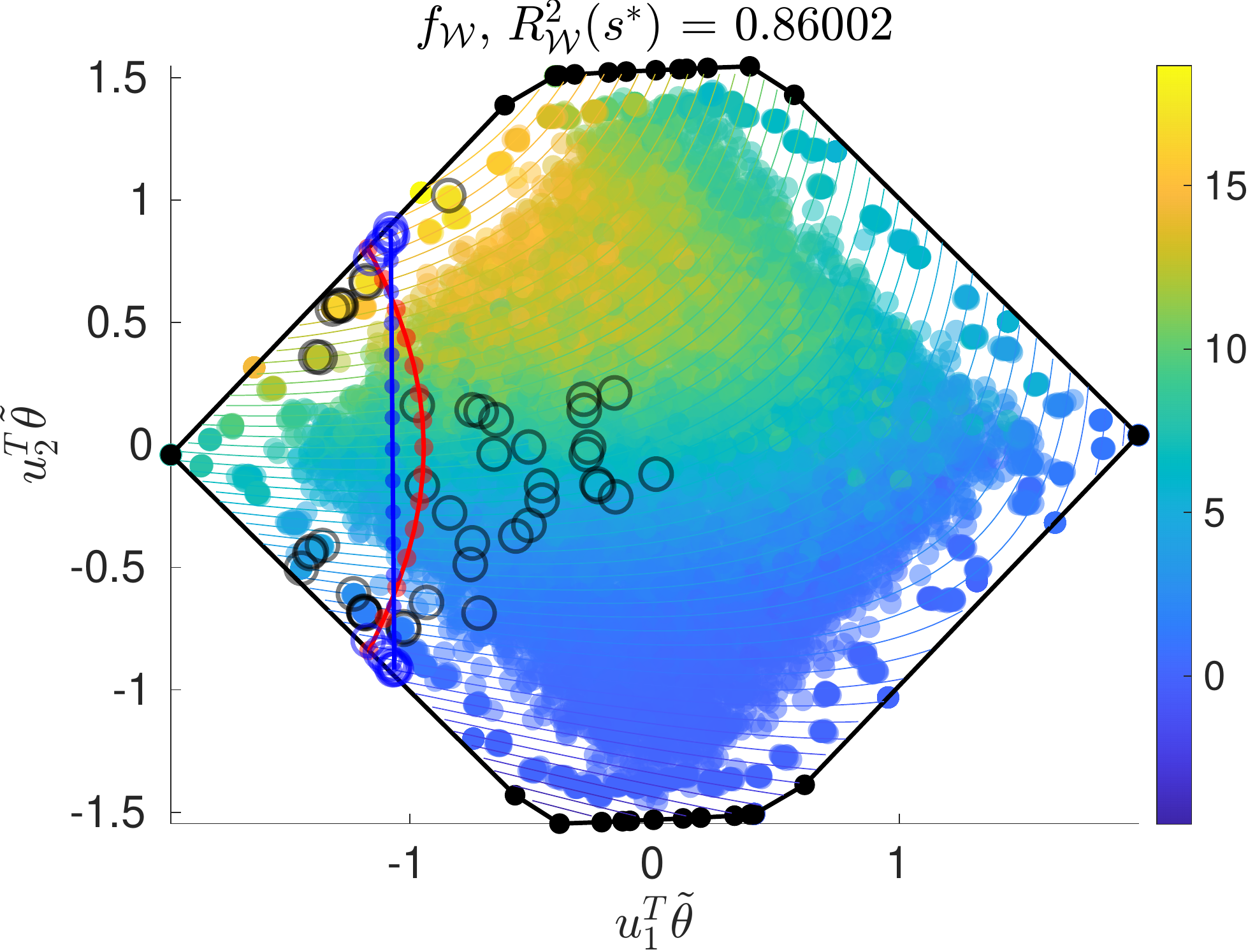}
  \label{fig:sub-first}
  \hfill
  \caption{Wi-Fi Throughput Shadow Plot}
\end{subfigure}
\begin{subfigure}{.5\textwidth}
  \centering
  \includegraphics[width=.85\linewidth]{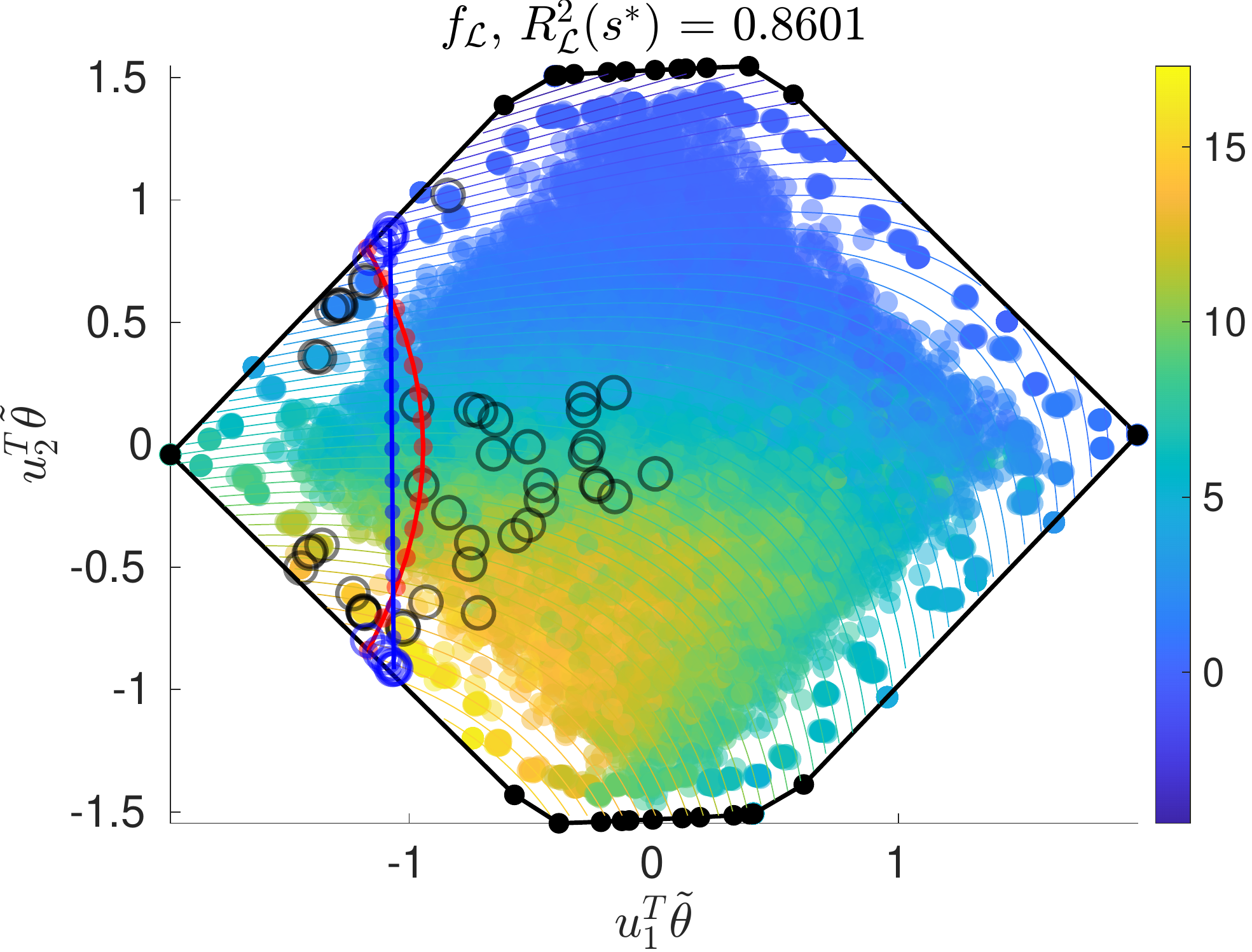}
  \caption{LAA Throughput Shadow Plot}
  \label{fig:sub-second}
\end{subfigure}
\caption{Pareto trace of convex quadratic ridge profiles. The quadratic Pareto trace (red curve and dots) is overlaid on a shadow plot over the mixed coordinates (colored scatter) with the projected bounds and vertices (zonotope) of the domain (black dots and lines), $\mathcal{Y}$. The quadratic approximations (colored contours) are contrasted against the true function evaluations represented by the colors of the scatter. Also depicted is the projection of the non-dominated domain values from the $N=10,000$ random samples (black circles). The active coordinate trace (red dots and curve) begins at the upper left-most boundary with near maximum quadratic Wi-Fi throughput and we move (smoothly) along a trajectory to the lower left-most boundary obtaining near maximum quadratic LAA throughput---maintaining an approximately best trade-off over the entire curve restricted to $\mathcal{Y}$. This is contrasted with a linear interpolation of full dimensional left ($t=0$) and right ($t=1$) approximations to \eqref{eq:multiopt} (blue dots and curve) and $15$ successive approximations to \eqref{eq:multiopt} with uniform discretization of $t\in [0,1]$ (blue circles). }
\label{fig:fig_1}
\end{figure*}
\normalsize

\begin{figure}
\centerline{\includegraphics[width=1\linewidth]{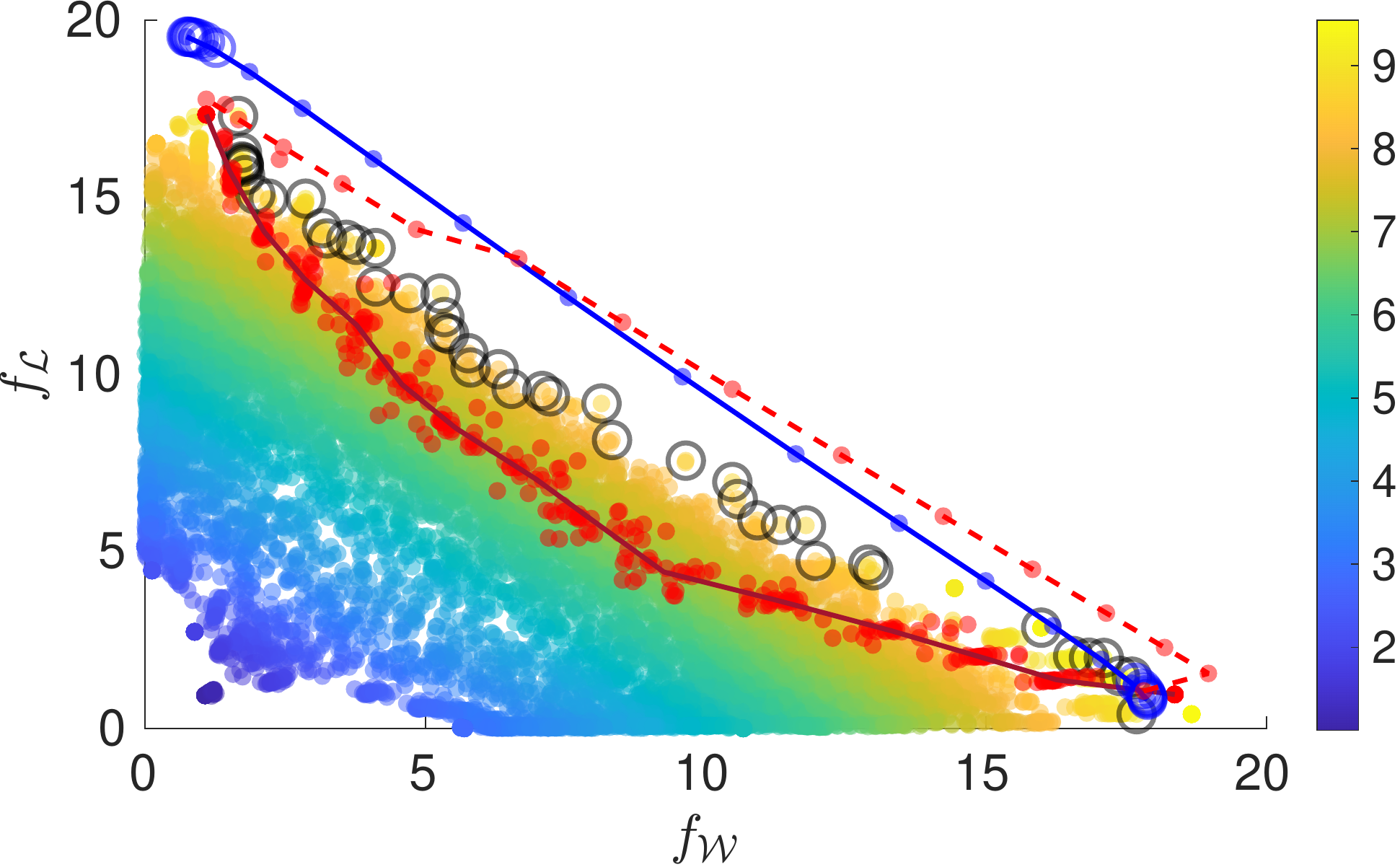}}
\caption{Approximation of the Pareto front resulting from the quadratic trace. The conditional Pareto front (solid-red curve) is contrasted with the non-dominated throughput values (black circles) and scatter of $N = 10,000$ random responses colored according to the averaged throughput ($t=0.5$) response. The solid-red curve is a Monte Carlo approximation of the conditional Pareto front \eqref{eq:cond_front} with corresponding evaluations (overlaid red dots) used to compute conditional means. The dashed-red curve and coincident dots represent a discretization over geodesic \eqref{eq:Pareto_geo} between left ($t=0$) and right ($t=1$) inactive maximizing arguments on the manifold of approximately Pareto optimal solutions \eqref{eq:Pareto_mfld}. Also shown is the linear Pareto front \eqref{eq:lin_front} through the full parameter space (blue curve and coincident dots) and a set of $15$ successive maximizations \eqref{eq:multiopt} over a uniform discretization of $t \in [0,1]$ (blue circles).
}
\label{fig_2}
\end{figure}
\normalsize
\raggedbottom

We demonstrate the ideas proposed in Section \ref{sec:AS} on the LAA-Wi-Fi coexistence scenario described in Section \ref{sec:sys_model} to maximize both throughputs simultaneously per \eqref{eq:multiopt} and subsequent simplification \eqref{eq:r_multiopt}. We apply active subspaces to simplify the multi-criteria optimization problem \eqref{eq:multiopt} by focusing on a reduced set of mixed MAC and PHY layer parameter combinations informing a manifold of near Pareto optimal solutions \eqref{eq:Pareto_mfld}. In particular, we contrast the geodesic trace \eqref{eq:Pareto_geo} with a simple linear interpolation of left and right approximations to \eqref{eq:multiopt} over the full dimensional space---informing \eqref{eq:geo_front} and \eqref{eq:lin_front}, respectively. We also visualize a conditional approximation of the Pareto front by taking the conditional average of throughput responses over inactive coordinates subordinate to $\mathcal{Z}_{\boldsymbol{\gamma}}$ using \eqref{eq:Pareto_mfld}---computed as a Monte Carlo approximation of \eqref{eq:cond_front}. The subsequent visualization of the Monte Carlo approximation of \eqref{eq:cond_front} emphasizes the expected range of throughput values captured, on average, over the near-optimal Pareto manifold \eqref{eq:Pareto_mfld}. Table \ref{Tb1} lists the scenario parameters and parameter bounds (constraints) used by the model for all computations and optimizations. 

The numerical experiment utilizes $N = 10,000$ samples resulting in $N(m+1) = 180,000$ total function evaluations to compute forward differences with $\delta = 10^{-6}$. The eigenvalues resulting from Algorithm \ref{alg1} indicate a steady decay devoid of dramatically different gaps in eigenvalues (see Fig. \ref{fig:eignevalues}). This is contrasted with the resulting decay of eigenvlaues associated with the principal Hessian directions \cite{Li1992} and quadratic active subspace approximations \cite{Grey2017}, both defined with respect to full-dimensional fits of throughputs \eqref{eq:global_spd_quad}, $\mathcal{CVX}\left( \{(\boldsymbol{\theta}_n, f_n)\}\right)$. Interestingly, the dimension reductions informed by the convex quadratic fits admit stronger decays in trailing eigenvalues (in both cases). This indicates a bias resulting from the choice of convex quadratic model over full-dimensional parameter space in contrast to the unbiased estimates of finite differences in Algorithm \ref{alg1} \cite{Constantine2015}. 

Despite the steady decay in eigenvalues resulting from Algortihm \ref{alg1}, the separate active subspaces for corresponding throughputs informed relatively accurate degree-$2$ to degree-$5$ polynomial approximations---computed utilizing sets of coordinate-output pairs $\lbrace (\boldsymbol{\tilde{W}}_{r,\mathcal{L}}^{\transpose}\boldsymbol{\tilde{\theta}}_n, (f_{\mathcal{L}})_n)\rbrace_{n=1}^N$ and $\lbrace (\boldsymbol{\tilde{W}}_{r,\mathcal{W}}^{\transpose}\boldsymbol{\tilde{\theta}}_n, (f_{\mathcal{W}})_n)\rbrace_{n=1}^N$---with varying coefficients of determination between $0.83-0.98$ for both throughputs when $r=1$ or $r=2$. 

The univariate subproblem in \eqref{eq:sub_multi} can be visualized and, in this experiment, achieved a unique maximizing argument $s^* \in [0,1]$ resulting in a mixed subspace with orthonormal basis given by two columns in a matrix $\boldsymbol{U}_r(s^*) = [\boldsymbol{u}_1 \,\, \boldsymbol{u}_2]$ over which separate quadratic ridge functions obtained roughly equal accuracy as approximations to their respective throughputs. The coefficients of determination varied monotonically and intersected over the Grassmannian parametrization. Consequently, the subproblem \eqref{eq:sub_multi} results in an approximately equal criteria for the accuracy of the quadratic ridge profiles $h_{\mathcal{W}}$ and $h_{\mathcal{L}}$, i.e., $R^2_{\mathcal{L}}(s^*) \approx R^2_{\mathcal{W}}(s^*) \approx 0.86$.

Fig \ref{fig:cvx_condition} depicts the condition numbers of convex combinations of quadratic Hessians resulting from a full-dimensional fit \eqref{eq:global_spd_quad}, i.e., $r=D$, and the chosen $(r=2)$-dimensional subspace over the quadratic trace. Examining Fig. \ref{fig:cvx_condition}, we observe that the condition number over the two-dimensional subspace is more than two orders of magnitude lower than the full dimensional fit---achieving near optimal conditioning in the middle of the trace. We also note that at least some portion of the $2$-dimensional trace passes through the domain while the full-dimensional trace does not. Consequently, the implicit regularization over the mixed subspace informed by subproblem \eqref{eq:sub_multi} gives a more stable (and feasible) quadratic trace than simply solving \eqref{eq:global_spd_quad} in full-dimensional space. The convex quadratic ridge approximations, Pareto trace approximations, non-dominated designs from the set of $N=10,000$ random parameters, along with projected random samples and mixed subspace zonotope are shown in Fig. \ref{fig:fig_1}. The Pareto front approximation resulting from various traces are shown with the non-dominated designs in Fig. \ref{fig_2}.

Observing Fig. \ref{fig:fig_1}, we depict traces and approximations over the two-dimensional subspace along with corresponding convex quadratic approximations and the data constituting $N=10,000$ projected samples augmented by stretch sampling. The red curve corresponds to the quadratic trace \eqref{eq:Pareto_geo} while the blue curve corresponds to the linear trace projected to the subspace for the purposes of visualization and comparison. As a ground truth, we depict the collection of projected non-dominated designs (black circles) resulting from sorting the full data set---effectively this may be viewed as a Pareto optimality solution obtained via random grid search. The non-dominated designs are determined from the full set of $N=10,000$ random samples sorted according to \cite{kung1975}. Note, it is not clear through this visualization that the non-dominated designs constitute elements of an alternative continuous approximation of the Pareto front---perhaps represented by an alternative low-dimensional manifold informed by a machine learning procedure. 

In Fig. \ref{fig_2}, we depict the corresponding throughput evaluations from conditional inactive samples over the quadratic trace as red dots along with associated Monte Carlo approximation of the conditional Pareto front \eqref{eq:cond_front}. In other words, the solid-red line (overlapping the cloud of red dots) connects conditional averages of throughputs utilizing $25$ inactive samples along a corresponding discretization of $15$ active coordinates over the subspace-based quadratic trace in Fig. \ref{fig:fig_1}---constituting a Monte Carlo approximation of \eqref{eq:cond_front}. The visualization emphasizes that the throughputs change significantly less over the inactive coordinates in contrast to the range of values observed over the trace. Contrasting the red dots (conditional samples) and solid-red curve (approximated conditional front) to the colored scatter (averaged throughput response) paired with all random evaluations emphasizes that the near-Pareto optimal manifold \eqref{eq:Pareto_mfld} satisfies, on average, an averaged (summed) thoughput, (i.e., $t=0.5$) which is approximately $\geq 7$. The conditional front \eqref{eq:cond_front} moves approximately through the non-dominated designs of a random grid search. Hence, we have supplemented with a near-optimal (predominately flat) Pareto manifold \eqref{eq:Pareto_mfld} which is implicitly regularized as a solution over a low-dimension subspace, and nearly captures the non-dominated designs from a random grid search, on average. 

However, there are infinite $\boldsymbol{\theta}$ in the original parameter space that correspond to points along the quadratic trace with a subset depicted in Fig. \ref{fig:fig_1} (solid-red curve and dots)---i.e., infinitely many $D-r$ inactive coordinate values that may change throughputs albeit significantly less (roughly an additional $15\% - 20\%$) than the two mixed active coordinates, $\gamma_1=\boldsymbol{u}_1^{\transpose}\boldsymbol{\tilde{\theta}}$ and $\gamma_2=\boldsymbol{u}_2^{\transpose}\boldsymbol{\tilde{\theta}}$. To reconcile the choice of infinitely many inactive coordinates, we consider a discretization of $15$ points along the geodesic trace \eqref{eq:Pareto_geo} (curvature and discretization depicted in Fig. \ref{fig:fig_1}) and associated Pareto front approximation \eqref{eq:geo_front} (red-dashed line in Fig. \ref{fig_2}). This is contrasted with $t$ discretized uniformly to approximate $15$ successive optimizations solved in the full dimensional space \eqref{eq:multiopt} (blue circles) and the linear trace \eqref{eq:lin_front} (blue line and dots). There is reasonable agreement in the solutions produced by all three approaches. However, the geodesic trace \eqref{eq:Pareto_geo} produces marginally better solutions from strict Wi-Fi optimization until the interesting intermediate design region (over $t\in [0,0.5]$). The naive approach of aggregating successive optimizations \eqref{eq:multiopt} over uniform discretization of $t$ struggles to identify any intermediate combinations of throughputs with solutions clustering towards one maximum or the other---a recognized issue in multi-criteria problems \cite{das1997closer} remedied by our alternative parameterizations \eqref{eq:geo_front} and \eqref{eq:lin_front}. Interestingly, and unexpectedly, the linear submanifold and subsequent front \eqref{eq:lin_front} perform comparatively well. This may suggest that the curved portion of the Pareto optimal manifold only affords minor improvements and an alternative Pareto manifold of near-optimal solutions could be built from a tubular neighborhood of the line segment interpolating left and right approximations to \eqref{eq:multiopt}.

There is some bias in the approximation of the conditional front (solid-red curve) in Fig. \ref{fig_2} that is not a least-squares curve of non-dominated throughput values (black circles) potentially due in part to the quadratic ridge approximations or regularization by virtue of simplifying over a subspace. However, this issue is reconciled by constructing the geodesic trace \eqref{eq:Pareto_geo} (with corresponding Pareto front as the dashed-red curve in Fig. \ref{fig_2}) that dominates the random grid search (black circles in Fig. \ref{fig_2}) and the majority of the linear trace (blue curve in Fig. \ref{fig_2}). We expect further refinements to these approximations (higher-order polynomials), increased subspace dimension, or numerical integration of the ODE in Prop. \ref{prop:ODE} with simplifications per Cor. \ref{corr:ridge_trace} will further improve the near-optimal Pareto manifold and subsequent traces.
\raggedbottom

\section{Conclusion \& Future Work} 
We have proposed a technique to simultaneously optimize the performance of two MNOs sharing limited unlicensed spectrum resources. An exploratory analysis utilizing an example of LAA coexistence with Wi-Fi network identified a common subspace-based dimension reduction of a basic model of network behavior. This enabled visualizations and low-dimensional approximations that led to a continuous approximation of the Pareto frontier for the multi-criteria problem of maximizing all convex combinations of network throughputs over MAC and PHY parameters. Such a result simplifies and regularizes the search for parameters that enable high quality performance of both networks, particularly compared to approaches that do not operate on a reduced parameter space. Analysis of the LAA-Wi-Fi example revealed an explainable and interpretable solution to an otherwise challenging problem---\emph{devoid of any known convexity until subsequent exploration}. 

Future work will incorporate alternative low-dimensional approximations including both cases of Grassmannian mixing and subspace unions to improve the trace. We will also study alternative methods of subspace and non-linear dimension reduction to accelerate reinforcement learning over the various near-optimal Pareto manifolds. Future approaches will enable spectrum sharing for a variety of wireless communications models over unlicensed bands by simplifying the design of wireless network operation and architecture---ultimately quantifying model parameter combinations giving near-optimal KPI trade-offs.

\bibliographystyle{IEEEtran}
\bibliography{bibliography}


\end{document}